\numberwithin{equation}{section}
\newtheorem{theorem}{Theorem}
\newtheorem{lemma}[theorem]{Lemma}
\theoremstyle{definition}
\theoremstyle{remark}
\numberwithin{equation}{section}
\title{Accurate transient heat flux from simple treatment of surface temperature distribution in the semi-infinite case}
\author[1]{David Buttsworth}
\author[2]{Timothy Buttsworth}
\affil[1]{School of Engineering, University of Southern Queensland}
\affil[2]{School of Mathematics and Statistics, The University of New South Wales}
\date{\today}
\begin{document}

\maketitle

\begin{abstract}
When the variations of surface temperature are measured both spatially and temporally, analytical expressions that correctly account for multi-dimensional transient conduction can be applied. 
To enhance the accessibility of these accurate multi-dimensional methods, expressions for converting between surface temperature and heat flux are presented as the sum of the one-dimensional component plus the multi-dimensional component.
Advantage arises herein because potential numerical challenges are isolated within the one-dimensional component and practitioners are already familiar with well-established one-dimensional methods.  
The second derivative of the surface heat flux distribution scaled by the thermal diffusivity and the duration of the experiment delivers an approximation of the multi-dimensional conduction term. 
For the analysis of experiments in which multi-dimensional effects are significant, a simplified numerical approach in which the temperature within each pixel is treated as uniform is demonstrated. 
The approach involves convolution of temperature differences and pixel-based impulse response functions, followed by a summation of results across the region of interest, but there are no singularities that require special treatment in the multi-dimensional component.
Recovery of heat flux distributions to within 1\,\% is demonstrated for two-dimensional heat flux distributions discretized using several tens of elements, and for a three-dimensional distribution discretized using several hundred pixels.
Higher accuracy can be achieved by using finer spatial resolution, but the level of discretization used herein is likely sufficient for practical applications since typical experimental uncertainties are much larger than 1\,\%. 
\end{abstract}


\section{Introduction}

Heat flux is frequently inferred from measurements of surface temperature history.
In the case of a semi-infinite surface with one-dimensional conduction,  
analysis methods include the Cook-Felderman \cite{Cook1966} approach, and the impulse response filtering method of Oldfield \cite{Oldfield2008}.

Multi-dimensional conduction effects can arise because the surface heat flux has a nonuniform distribution, or because the conducting solid has a finite size or is inhomogeneous. For a steady heat flux distribution, the magnitude of the error associated with multi-dimensional conduction effects increases with time, so provided the necessary temperature history data can be acquired in a short enough period of time, a one-dimensional approach may be sufficiently accurate. The validity of a one-dimensional method can initially be assessed by comparing relevant physical distances to the size of the heat kernel at the end of the experiment. For example, if experiments are performed using an insulating solid with a thermal diffusivity $\alpha = 1 \times 10^{-7}$m$^2$/s and a test period of 200\,ms, then the standard deviation of the heat kernel at this time is $\sqrt{2\alpha t}=0.2$\,mm. If the physical scale of variations in heat flux or the geometric features of the solid are about this size or smaller, a one-dimensional analysis is unlikely to be sufficient.

In cases where multi-dimensional conduction effects are important, the method introduced by Estorf \cite{estorf2006image} can be applied for the accurate treatment of surface thermal videography. The method accurately accounts for multi-dimensional conduction effects through application of a spatial integral with the appropriate Gaussian mask, followed by a convolution integral. These boundary integrals enable accurate deduction of flux while avoiding the computational cost of other approaches relying on simulations of the transient conduction throughout the solid.  A similar analysis was developed by Liu et al. \cite{liu2010analytical} for treating the more complicated case of a thin temperature-sensitive surface layer with different thermal properties from the substrate. An inverse methodology was advocated by Estorf \cite{estorf2006image} for improved treatment of noise in the case of homogeneous solids, and similar strategies have been explored in the case of two-layer work by Liu et al. \cite{liu2011correcting,liu2018analytical,liu2019inverse}.  The ITLR group at the University of Stuttgart have successfully applied the Estorf method to the treatment of multi-dimensional conduction effects that arise due to the relatively long duration (several minutes) of their experiments \cite{brack2022experimental,hartmann2024determination}.

Although the boundary integral methods introduced first by Estorf \cite{estorf2006image} and extended by Liu et al. \cite{liu2010analytical} appear well-suited to deduction of heat flux from surface thermal videography, the adoption of such methods has not been universal. For example, to accommodate the multi-dimensional conduction effects within a plate delivering a sonic jet into supersonic crossflow, Bae et al. \cite{bae2021measurement} defined the surface heat flux via a finite volume treatment of the plate (1.1 million cells were used) when Dirichlet boundary conditions based on the IR thermal videography were applied. 
Munoz et al. \cite{munoz2012instability} obtained IR videography on a 7~degree half-angle cone and followed the approach developed by Estorf \cite{estorf2006image} for temperature calibration, emissivity correction and spatial mapping, but rather than adopting the full-surface inverse method for deducing the flux, a direct one-dimensional finite difference analysis was used.
Zhang et al. \cite{zhang2021heat} similarly cite \cite{estorf2006image} as providing an option for treatment of multi-dimensional conduction effects in their cone model, but they instead proceed to apply a finite difference method that approximates the conduction as involving one-dimensional effects only.

While the Estorf method is correct and accurate, we believe that it
is not being widely adopted for several reasons. Firstly, there is the perception, if not the reality, that is is easier to discretize the necessary volume of the conducting solid and apply the measured temperature as the boundary condition and obtain the flux therefrom. Secondly, practitioners know thoroughly their own preferred, direct, one-dimensional heat conduction treatments, while multi-dimensional methods appear complex and require implementation of new techniques. Finally, errors associated with multi-dimensional conduction effects generally grow monotonically from commencement of the experiment, so errors due to multi-dimensional effects in long-duration experiments may cause insidious errors in the deduction of heat transfer coefficients and flow recovery temperatures. Some experimenters may even be tempted to avoid the hurdle of implementing the multi-dimensional analysis by performing short experiments only, failing to capitalize on opportunities to acquire accurate heat transfer coefficient and flow recovery temperature data.

Through this paper, we seek to encourage broader adoption of the multi-dimensional analysis by recasting the relevant multi-dimensional expressions in the form of the one-dimensional component plus the multi-dimensional component. The expressions are developed without invoking Fourier or Laplace transforms. Our formulas are presented in \cref{NtoD} and \cref{DtoN} below, with the derivation and the application of these formulas demonstrated in subsequent sections. Since research groups who perform transient heat transfer experiments will invariably have their own preferred methods for practical treatment of the one-dimensional conduction problem, the remaining task of implementing the multi-dimensional correction represents only modest effort. We believe the simplicity of the expression for the multi-dimensional conduction component will encourage broader adoption of these methods.

\section{Preliminaries, the formula and approach}\label{sec:prelim}
\subsection{Domain}
We use $\Omega=(\mathbb{R}^n)^+$ to denote \textit{upper-half space}, i.e., the set of points $(x_1,\cdots,x_n)\in \mathbb{R}^n$ with $x_n\ge 0$. The boundary of $(\mathbb{R}^n)^+$ is denoted $\partial \Omega$, and consists of the points $(x_1,\cdots,x_n)$ with $x_n=0$. We let $\nu=(0,0,\cdots,-1)$ be the outward-pointing unit normal vector to $\Omega$. Thus, we will use $\frac{\partial}{\partial \nu}$ or $\nabla_{\nu}$ to denote the partial differential operator $-\frac{\partial}{\partial x_n}$. In our discussion, we often consider functions of two spatial variables $x$ and $y$; therefore, we will sometimes use the notation $\frac{\partial}{\partial \nu_x}$ and $\nabla_{\nu_x}$ (or $\frac{\partial}{\partial \nu_y}$ and $\nabla_{\nu_y}$) to specify the variable in which the normal differentiation occurs. 
\subsection{Heat equation}
Choose an arbitrary maximal time $T_{\text{max}}>0$, and take a time-varying solution $u$ of the heat equation on $\Omega$, i.e., a smooth function $u:[0,T_{\text{max}}]\times \Omega\to \mathbb{R}$ which satisfies  
\begin{align}\label{heatequation}
\begin{split}
     \frac{\partial u}{\partial t}&=\Delta u \ \text{in} \ \Omega \ \text{(heat equation)},\\
    u(0,x)&=0 \ \text{for} \ x\in \Omega \ \text{(initial conditions)}.
    \end{split}
\end{align}
Two key quantities associated to the solution $u$ are the \textit{boundary temperature} $g$ and \textit{boundary flux} $p$. These are time-varying functions defined only for spatial points on the boundary (so we write $g,q:[0,T_{\text{max}}]\times \partial \Omega\to \mathbb{R}$), which are defined by
\begin{align}
    g(t,x)&=u(t,x), \qquad \text{(the temperature at the boundary), and} \label{DCs}\\
    q(t,x)&=\frac{\partial u(t,x)}{\partial \nu_x}, \qquad \text{(the flux at the boundary).} \label{NCs}
\end{align}

\subsection{Formulae}
The key formulae in this paper relate the surface temperature $g$ and the surface heat flux $q$:
\begin{align}\label{NtoD}
    g(t,x) & = \underbrace{\int_0^{t}  \frac{q(s,x)}{(\pi (t-s))^{1/2} } ds}_{\textrm{one-dimensional component, $g_{1d}$}} + \underbrace{\int_0^{t} \int_{\partial \Omega}   (q(s,y) - q(s,x)) \frac{\text{exp} \left(- \frac{|{x}-{y}|^2}{4(t-s)} \right)} {4(\pi (t-s))^\frac{n}{2}}  dy ds}_{\textrm{multi-dimensional component, $g_{md}$}} , 
\end{align}
and 
\begin{align}\label{DtoN}
\begin{split}
       q(t,x)
         &=\underbrace{\frac{g(t,x)}{\sqrt{\pi t}}+\int_0^t \frac{g(t,x)-g(s,x)}{\sqrt{4\pi} (t-s)^{3/2}}ds}_\textrm{one-dimensional component, $q_{1d}$} - \underbrace{\int_0^t \int_{\partial \Omega}(g(s,y)-g(s,x))\frac{\text{exp}\left(-\frac{\left|x-y\right|^2}{4(t-s)}\right)}{(t-s)(4\pi (t-s))^{\frac{n}{2}}}dyds}_\textrm{multi-dimensional component, $q_{md}$}.
         \end{split}
\end{align}
We have explicitly decomposed our formulae into two parts: the one-dimensional component and the multi-dimensional term.  In the case of the temperature result \cref{NtoD}, we see that the actual surface temperature rise at a given location $g(t,x)$ will be larger than indicated by the one-dimensional component if the heat flux elsewhere $q(s,y)$ is larger than at the location of interest $q(s,x)$. In the case of the heat flux result \cref{DtoN}, we see the actual heat flux at a given location $q(t,x)$ will be smaller than the one-dimensional component if the temperature rise elsewhere $g(s,y)$ is larger than at the location of interest $g(s,x)$.  

\subsection{Approach}

The one-dimensional (1d) plus multi-dimensional (md) decomposition leverages significant practical advantage for heat transfer analysis  because the singularity in the temperature-to-flux treatment is isolated within the one-dimensional component. 
Although the time-scaled Gaussian terms in the multi-dimensional components of \cref{NtoD} and \cref{DtoN} may appear challenging for $s = t$, the spatial integration proceeds first and delivers a result that is amenable to integration with time.
Transient heat flux experimenters are already well-equipped with the necessary tools for one-dimensional transient heat flux analyses, so the challenge in tackling the multi-dimensional problem is diminished.

In this paper, we derive \cref{NtoD} and \cref{DtoN}, and demonstrate that they are inverses of each other, and we introduce an elementary numerical treatment allowing flux to be recovered from temperature. We illustrate the accuracy that can be achieved through the elementary numerical treatment by: (1) assuming a certain distribution of flux $q$; (2) applying that flux $q$ in the flux-to-temperature result \cref{NtoD} to give $g$; (3) using $g$ in the temperature-to-flux result \cref{DtoN} to recover a heat flux $q_r$; and (4) identifying the magnitude of the error in the numerical treatment by comparing the recovered flux $q_r$ with assumed distribution $q$. 

\section{Derivation}\label{sec:derivation}
\subsection{Heat kernels}
The key ingredients in our derivation are the fundamental solutions of the Dirichlet and Neumann problems for the heat equations, which we will refer to as the \textit{Neumann} and \textit{Dirichlet heat kernels}, respectively: 
\begin{align*}
    k_{N}(t,x,y)&= k_{E}(t,x,y) + k_{E}(t,x,y^*), \\
    k_{D}(t,x,y)&= k_{E}(t,x,y) - k_{E}(t,x,y^*), 
\end{align*}
where the Euclidean heat kernel is 
\begin{align*}
    k_{E}(t,x,y)&= \frac{\text{exp}\left(\frac{-\left|x-y\right|^2}{4t}\right)}{(4\pi t)^{\frac{n}{2}}} 
\end{align*}
and the $*$ denotes reflection over the boundary $\partial\Omega$, i.e., $(y_1,\cdots,y_{n-1},y_n)^*=(y_1,\cdots,y_{n-1},-y_n)$. 
The Neumann and Dirichlet heat kernels have the following properties:
\begin{itemize}
    \item they both satisfy the heat equation in both spatial variables: $\frac{\partial k_N(t,x,y)}{\partial t}=\Delta_x k_N(t,x,y)=\Delta_y k_N(t,x,y)$, and similarly for $k_D$;
    \item for a fixed $y\notin \partial \Omega$, the initial conditions of $k_N(t,x,y)$ and $k_D(t,x,y)$ are unit energy compactified into the point $x=y$;
    \item both satisfy the relevant boundary conditions for $x\in \partial \Omega$ or $y\in \partial \Omega$. 
\end{itemize}
\subsection{Solving the Dirichlet and Neumann problems for temperature}
It is well-known that the solution of the Neumann problem (\eqref{heatequation} and \eqref{NCs}) is 
\begin{align}
\label{eq:NeumannConv}
    u(t,x)=\int_0^t \left(\int_{\partial \Omega}k_N(t-s,x,y)q(s,y)dy\right)ds.
\end{align}
In the above expression, we recognize the temperature effects arising from unit energy delivered at points on the surface $k_N$ are being scaled by the local surface distribution of heat flux $q(t,x)$ and integrated across the surface; the net temperature result at any point $u(t,x)$ is the convolution of the surface-integrated effects through earlier times.  
The variable $x$ can be taken all the way to the boundary in this formula, giving the formula \eqref{NtoD} for $g$ in terms of $q$. 

Perhaps surprisingly, finding an expression for $q$ in terms of $g$ is much more difficult than the other way around. 
We can similarly use the Dirichlet heat kernel to find a formula for $u$ in terms of $g$ (i.e., solving \eqref{heatequation} subject to the Dirichlet condition \eqref{DCs}):
\begin{align}\label{usolution}
    u(t,x)=\int_{0}^{t} \left( \int_{\partial \Omega}\frac{\partial k_D}{\partial \nu_y}(t-s,x,y)g(s,y)dy \right) ds.
\end{align}
Formula \eqref{usolution} illustrates what we expect; it is possible to uniquely determine the temperature at \textit{any} point $x$ using only data of the temperature at the boundary at previous times. 
However, the problem with \eqref{usolution} is that it only holds for $x$ in the interior, and breaks down at the boundary. Essentially, the new problem is that, while both the Neumann and Dirichlet heat kernels have singularities at $x=y$, $t=0$, the singularity of $\frac{\partial k_D}{\partial \nu_y}$ is not spatially integrable if $x$ is on the boundary, and more care is required.

\subsection{Solving the Dirichlet problem for flux}
To work around the inability to integrate the Dirichlet derivative on the boundary, we introduce an auxilary function $\tilde{u}:[0,T_{\text{max}}]\times \Omega\to \mathbb{R}$ so that $\tilde{u}(0,x)=0$ and $\tilde{u}(t,x)=g(t,x)$ for $x\in \partial \Omega$. So $\tilde{u}$ satisfies the same initial and boundary conditions as $u$, but does not necessarily satisfy the heat equation. Then the function $v=\tilde{u}-u$ satisfies 
\begin{align}\label{Duhameld}
    \left(\frac{\partial}{\partial t}-\Delta \right)v&= \left(\frac{\partial}{\partial t}-\Delta \right)\tilde{u}\\
     v(0,x)&=0 \ \text{for} \ x\in \Omega \ \text{(initial conditions)}\\
    v(t,x)&=0 \ \text{for} \ x\in \partial \Omega \ \text{(boundary conditions)}. 
\end{align}
\textit{Duhamel's principle} states that
\begin{align}\label{vsoln}
    v(t,x)=\int_0^t \int_{\Omega}k_D(t-s,x,y)\left(\frac{\partial}{\partial s}-\Delta_y \right)\tilde{u}(s,y)dyds. 
\end{align}
The good news about formula \eqref{vsoln} is that it works for $x$ all the way up to the boundary. And in fact, we are allowed to put the derivative inside the integral for this particular formula (see \cref{justify} for the justification):
\begin{align}\label{vdinside}
    \frac{\partial v}{\partial \nu_x}(t,x)=\int_0^t \int_{\Omega}p(t-s,x,y)\left(\frac{\partial}{\partial s}-\Delta \right)\tilde{u}(s,y)dyds, 
\end{align}
where $p(t,x,y)=\frac{\partial k_D(t,x,y)}{\partial \nu_x}$, and $x$ is assumed to be in the boundary $\partial \Omega$. Observe that 
\begin{align}\label{pprop}
\frac{\partial p}{\partial t}-\Delta_y p&=0\\
p(t,x,y)&=0 \ \text{if} \ y\in \partial \Omega. 
\end{align}
The first equation holds because $k_D(t,x,y)$ satisfies the heat equation in $t,y$ for each $x$ (so the same is true if we differentiate in $x$). The second condition is because $k_D$ satisfies Dirichlet conditions in $y$, so the same is true for the derivative in $x$. We now proceed to evaluate \eqref{vdinside}. Ignoring the $ds$ integral for the moment, and fixing $s<t$, we are away from the $p$ singularity, so 
\begin{align*}
  &\int_{\Omega}p(t-s,x,y)\left(\frac{\partial}{\partial s}-\Delta \right)\tilde{u}(s,y)dy
  \\&= \int_{\Omega}p(t-s,x,y)\frac{\partial \tilde{u}(s,y)}{\partial s} dy- \int_{\Omega}p(t-s,x,y)\Delta\tilde{u}(s,y)dy\\
  &=\frac{\partial}{\partial s}\left(\int_{\Omega}p(t-s,x,y)\tilde{u}(s,y)dy\right)+\int_{\Omega}\frac{\partial }{\partial t}p(t-s,x,y)\tilde{u}(s,y)dy- \int_{\Omega}p(t-s,x,y)\Delta\tilde{u}(s,y)dy\\
  &=\frac{\partial}{\partial s}\left(\int_{\Omega}p(t-s,x,y)\tilde{u}(s,y)dy\right)+\int_{\Omega}\left(\frac{\partial }{\partial t}-\Delta_y\right)p(t-s,x,y)\tilde{u}(s,y)dy+\int_{\partial \Omega}g(s,y)\nabla_{\nu_y}p(t-s,x,y)dy\\
  &=\frac{\partial}{\partial s}\left(\int_{\Omega}p(t-s,x,y)\tilde{u}(s,y)dy\right)+\int_{\partial \Omega}g(s,y)\nabla_{\nu_y}p(t-s,x,y)dy,
\end{align*}
where in the second last line, we used Green's second integral identity, coupled with the fact that $p(t-s,x,y)=0$ if $y\in \partial \Omega$, and in the last line, we used the fact that $p$ satisfies the heat equation in $t,y$. 
We can finally re-introduce the $ds$ integral, using $\tilde{u}(0,y)=0$ to find 
\begin{align}\label{fluxsplit}
    \frac{\partial v(t,x)}{\partial \nu_x}&=\lim_{\tau\to t}\left(\underbrace{\int_{\Omega}p(t-\tau,x,y)\tilde{u}(\tau,y)dy}_{I_1}+\underbrace{\int_0^{\tau}\int_{\partial \Omega}g(s,y)\nabla_{\nu_y}p(t-s,x,y)dyds}_{I_2}\right).
\end{align}
It would be nice to take this limit by simply replacing all of the $\tau$s with $t$s. However, this must fail because both $\lim_{\tau\to t}I_1$ and $\lim_{\tau\to t}I_2$ diverge; it is only their \textit{sum} which converges. Thus, to proceed, we will decompose $I_1$ and $I_2$ into more convenient converging and diverging terms:
\begin{align*}
    I_1&=\int_{\Omega}p(t-\tau,x,y)\tilde{u}(\tau,y)dy\\
    &=\int_{\Omega}p(t-\tau,x,y)(\tilde{u}(\tau,y)-g(\tau,x))dy+(g(\tau,x)-g(t,x))\int_{\Omega}p(t-\tau,x,y)dy+g(t,x)\int_{\Omega}p(t-\tau,x,y)dy 
\end{align*}
and
\begin{align*}
    I_2&=\int_{0}^{\tau}\int_{\partial \Omega}(g(s,y)-g(s,x))\nabla_{\nu_y}p(t-s,x,y)dyds\\
    &+ \int_{0}^{\tau}(g(s,x)-g(t,x))\int_{\partial \Omega}\nabla_{\nu_y}p(t-s,x,y)dyds\\
    &+ g(t,x)\int_{0}^{\tau}\int_{\partial \Omega}\nabla_{\nu_y}p(t-s,x,y)dyds.
\end{align*}
We have $\lim_{\tau\to t}(g(\tau,x)-g(t,x))\int_{\Omega}p(t-\tau,x,y)dy=0$, assuming $g$ is differentiable (see \cref{justify} for some more details). 
Thus, \eqref{fluxsplit} 
becomes
 \begin{align*}
    -\frac{\partial u(t,x)}{\partial \nu_x}&=\lim_{\tau\to t}\left(\int_{\Omega}p(t-\tau,x,y)(\tilde{u}(\tau,y)-g(\tau,x))dy-\frac{\partial \tilde{u}}{\partial \nu_x}(t,x)\right)\\
    &+g(t,x)\lim_{\tau \to t}\left(\int_{\Omega}p(t-\tau,x,y)dy+\int_{0}^{\tau}\int_{\partial \Omega}\nabla_{\nu_y}p(t-s,x,y)dyds\right)\\
    &+ \int_{0}^{t}(g(s,x)-g(t,x))\int_{\partial \Omega}\nabla_{\nu_y}p(t-s,x,y)dyds\\
    &+\int_{0}^{t}\int_{\partial \Omega}(g(s,y)-g(s,x))\nabla_{\nu_y}p(t-s,x,y)dyds.
\end{align*}
It turns out that the first limit vanishes (again, see \cref{justify} for more details), so we finally arrive at 
\begin{align}
\begin{split}
    &q(t,x)\\
    &=-\underbrace{g(t,x)\lim_{\tau \to t}\left(\int_{0}^{\tau}\int_{\partial \Omega}\nabla_{\nu_y}p(t-s,x,y)dyds+\int_{\Omega}p(t-\tau,x,y)dy\right)}_{\text{constant contribution}}\\
    &+ \underbrace{\int_{0}^{t}(g(t,x)-g(s,x))\int_{\partial \Omega}\nabla_{\nu_y}p(t-s,x,y)dyds}_{\text{contribution from temporal variance at point $x$}}\\
    &+\underbrace{\int_{0}^{t}\int_{\partial \Omega}(g(s,x)-g(s,y))\nabla_{\nu_y}p(t-s,x,y)dyds}_{\text{multi-dimensional contribution}}.
    \end{split}
\end{align}
Let us compute these terms. First, when $x_n=0$ (so $x\in \partial \Omega$) 
\begin{align*}
    p(t,x,y)=\frac{\partial k_D}{\partial \nu_x}(t,x,y)=-\frac{y_n}{t(4\pi t)^{\frac{n}{2}}}\text{exp}\left(\frac{-\left|x-y\right|^2}{4t}\right),
\end{align*}
since $\left|x-y^*\right|=\left|x-y\right|$ if one of $x,y$ is on the boundary. 
Therefore, for $y$ on the boundary ($y_n=0$), we also have 
\begin{align*}
    \frac{\partial p}{\partial \nu_y}(t,x,y)=\frac{1}{t(4\pi t)^{\frac{n}{2}}}\text{exp}\left(\frac{-\left|x-y\right|^2}{4t}\right).
\end{align*}
We compute 
\begin{align*}
    \int_{\partial \Omega}\frac{\partial p}{\partial \nu_y}(t,x,y)dy=\frac{1}{t\sqrt{4\pi t}},
\end{align*}
so 
\begin{align*}
    \int_0^{\tau}\int_{\partial \Omega} \nabla_{\nu_y}p(t-s,x,y)dy=\frac{1}{\sqrt{\pi(t-\tau)}}-\frac{1}{\sqrt{\pi t}},
\end{align*}
and also
\begin{align*}
    \int_{\Omega}p(t-\tau,x,y)dy&=-\frac{1}{(t-\tau)\sqrt{4\pi (t-\tau)}}\int_{0}^{\infty}y_n \text{exp}\left(\frac{-y_n^2}{4(t-\tau)}\right)dy_n\\
    &=-\frac{1}{\sqrt{4\pi (t-\tau)}}\int_0^{\infty}z e^{-z^2/4}dz=-\frac{1}{\sqrt{\pi (t-\tau)}}.
\end{align*}

Therefore our formula is 
\begin{align*}
       q(t,x)
         &=\underbrace{\frac{g(t,x)}{\sqrt{\pi t}}+\int_0^t \frac{g(t,x)-g(s,x)}{\sqrt{4\pi} (t-s)^{3/2}}ds}_\textrm{one-dimensional component, $q_{1d}$}-\underbrace{\int_0^t \int_{\partial \Omega}(g(s,y)-g(s,x))\frac{\text{exp}\left(-\frac{\left|x-y\right|^2}{4(t-s)}\right)}{(t-s)(4\pi (t-s))^{\frac{n}{2}}}dyds}_\textrm{multi-dimensional component, $q_{md}$}.
\end{align*}

The above expression indicates the actual flux perpendicular to the surface can be obtained by accounting for the multi-dimensional component, treating it as a correction to the one-dimensional component. The form of the above expression which separates the one-dimensional and multi-dimensional components has merit because: (1) practitioners in the field of transient heat transfer measurements will be familiar with the one-dimensional component; (2) any preferred approach can be used for obtaining the apparent flux from temperature measurements by treating the conduction as though it were one-dimensional; and (3) there is no singularity in the multi-dimensional component.  Therefore, the only challenge for practitioners seeking high accuracy heat flux results throughout their relatively long-duration experiments becomes that of demonstrating the successful treatment of the multi-dimensional component only. 

\section{Illustration for continuous distributions}

Transient heat conduction experiments provide surface temperature measurements at discrete locations only, and while sub-millimeter spatial resolution can be achieved sometimes (for example, in the case of IR videography, depending on the optical magnification), the data only approximate the actual continuous distributions. Before proceeding to the treatment of discrete temperature measurements, examples of continuous distributions are considered -- a linear flux and a quadratic flux -- to demonstrate the application of the results, \cref{NtoD} and \cref{DtoN}.

\subsection{Linear flux}
\label{sec:linearflux}

Let us suppose we have a time-independent flux at the boundary that is given by 
\begin{align*}
    q(t,x) = v\cdot x,
\end{align*}
where $v=(v_1,\cdots,v_{n-1},0)$ is some fixed vector which is tangential to the boundary. The Neumann heat kernel, which for $y\in \partial \Omega$ can be written  
\begin{align*}
    k_{N}(t,x,y)&=2\frac{\text{exp}\left(\frac{-\left|x-y\right|^2}{4t}\right)}{(4\pi t)^{\frac{n}{2}}},
\end{align*}
is applied in \cref{eq:NeumannConv} to give the surface temperature as 
\begin{align*}
    g(t,x)&=\int_0^t \int_{\partial \Omega} 2v\cdot y \frac{\text{exp}\left(\frac{-\left|x-y\right|^2}{4(t-s)}\right)}{(4\pi (t-s))^{\frac{n}{2}}} dyds,
\end{align*}
and making a change of variable $z = (y-x)/\sqrt{t-s}$,
\begin{align*}
     g(t,x) &=\frac{2}{(4\pi)^{n/2}}\int_0^t\frac{1}{\sqrt{t-s}} \int_{\partial \Omega} v\cdot (z\sqrt{t-s}+x)e^{-\left|z\right|^2/4}dzds\\
    &=\frac{2v\cdot x (4\pi )^{n/2-1/2}}{(4\pi)^{n/2}}\int_0^t \frac{1}{\sqrt{t-s}}ds\\
    &=\frac{2v\cdot x \sqrt{t}}{ \sqrt{\pi}}.
\end{align*}
We see this result is the same as that obtained from the one-dimensional component of \cref{NtoD}, confirming expectations that there is no multi-dimensional contribution in the case of a linear distribution of flux. Turning attention to \cref{DtoN} to hopefully recover the correct time-independent flux on the boundary, we can see that the multi-dimensional component of flux vanishes because 
\begin{align*}
q_{md} &= \int_0^t \int_{\partial \Omega}(g(s,y)-g(s,x))\frac{\text{exp}\left(-\frac{\left|x-y\right|^2}{4(t-s)}\right)}{(t-s)(4\pi (t-s))^{\frac{n}{2}}}dy ds \\
      &=  \int_0^t \frac{2\sqrt{s}}{\sqrt{\pi}} \underbrace{ \int_{\partial \Omega} v \cdot \left( y  - x \right) \frac{\text{exp}\left(-\frac{\left|x-y\right|^2}{4(t-s)}\right)}{(t-s)(4\pi (t-s))^{\frac{n}{2}}}dy }_{=0} ds\\
      &=0.
\end{align*}
Therefore, in this case, \cref{DtoN} reduces to the one-dimensional component only and this gives
\begin{align*}
    q(t,x) &= \frac{g(t,x)}{\sqrt{\pi t}}+\int_0^t \frac{g(t,x)-g(s,x)}{\sqrt{4\pi} (t-s)^{3/2}}ds \\
    &= \frac{2v\cdot x}{\pi}-\frac{v\cdot x}{\pi}\int_0^t \frac{\sqrt{s}-\sqrt{t}}{(t-s)^{3/2}}ds\\
    &=v\cdot x \left(\frac{2+2\arcsin(1)-2}{\pi}\right)\\
    &=v\cdot x ,
\end{align*}
as required. 

\subsection{Quadratic flux}
\label{sec:quadraticflux}
To demonstrate the multi-dimensional components of the analysis, we need to consider a non-linear distribution of flux. Let us work in two-dimensions, and suppose the boundary flux is given by 
\begin{align*}
    q(t,x_1) = x_1^2.
\end{align*}
Applying \cref{eq:NeumannConv} gives the boundary temperature
\begin{align*}
    g(t,x_1)&=\int_0^t \int_{-\infty}^{\infty} 2 y_1^2 \frac{\text{exp}\left(\frac{-(x_1-y_1)^2}{4(t-s)}\right)}{4\pi (t-s)} dy_1 ds,
\end{align*}
and making a change of variable $z = (y_1-x_1)/\sqrt{t-s}$ enables development of an explicit expression,
\begin{align*}
    g(t,x_1)
    &=\frac{1}{2\pi}\int_0^t \frac{1}{\sqrt{t-s}}\int_{-\infty}^{\infty}(z\sqrt{t-s}+x_1)^2e^{-z^2/4}dzds\\
    &=\frac{1}{2\pi}\int_0^t \frac{1}{\sqrt{t-s}}\int_{-\infty}^{\infty}e^{-z^2/4}\left(x_1^2+z^2(t-s)\right)dzds\\
    &=\frac{1}{2\pi}\int_0^t \frac{1}{\sqrt{t-s}}\left(2x_1^2\sqrt{\pi}+4(t-s)\sqrt{\pi}\right)ds\\
    &= \frac{2}{\sqrt{\pi}}\left(x_1^2\sqrt{t}+\frac{2}{3}t^{3/2}\right)\\
    &=\frac{2}{\sqrt{\pi}} x_1^2 \sqrt{t} + \frac{4}{3 \sqrt{\pi}} t^{3/2}.
\end{align*}
We note the first and second terms in the above expression are the one-dimensional and multi-dimensional contributions corresponding to the first and second terms in \cref{NtoD}. In this quadratic heat flux case, the temperature rise at any location is larger than would be inferred from the one-dimensional analysis alone.
    
Now we can work backwards using the two-dimensional version of \cref{DtoN}, 
\begin{align*}
q(t,x_1) = \underbrace{\frac{g(t,x_1)}{\sqrt{\pi t}}+\int_0^t \frac{g(t,x_1)-g(s,x_1)}{\sqrt{4\pi} (t-s)^{3/2}}ds}_\textrm{one-dimensional component, $q_{1d}$}-\underbrace{\int_0^t \int_{-\infty}^{\infty}(g(s,y_1)-g(s,x_1))\frac{\text{exp}\left(-\frac{(x_1-y_1)^2}{4(t-s)}\right)}{4\pi (t-s)^2}dy_1ds}_\textrm{multi-dimensional component, $q_{md}$} ,
\end{align*}
to see if we can correctly recover the assumed quadratic flux.
Considering the one-dimensional component first, and substituting the calculated temperature rise $g(t,x_1)$ we have
\begin{align*}
    q_{1d} & = \frac{2}{\pi}\left(x_1^2+\frac{2t}{3}\right)+\int_0^t \frac{x_1^2(\sqrt{t}-\sqrt{s})+\frac{2(t^{3/2}-s^{3/2})}{3}}{\pi (t-s)^{3/2}}ds \\
    &= \frac{2}{\pi}\left(x_1^2+\frac{2t}{3}\right)+\frac{1}{\pi}\left(x_1^2(\pi-2)+\frac{2}{3}\left(\frac{3\pi t}{2}-2t\right)\right) \\
    &= x_1^2 + t
\end{align*}
The multi-dimensional component is 
\begin{align*}
    q_{md}&=\int_0^t \int_{-\infty}^\infty \frac{2\sqrt{s}}{\sqrt{\pi}}(y_1^2-x_1^2) \frac{\text{exp}\left(-\frac{(x_1-y_1)^2}{4(t-s)}\right)}{4\pi (t-s)^2} dy_1 ds, \\
    \end{align*}
    and making a change of variable $z = (y_1-x_1)/\sqrt{t-s}$,
    \begin{align*}
    q_{md} &= \int_0^t \frac{2\sqrt{s}}{\sqrt{\pi}} \int_{-\infty}^{\infty}(2 x_1 z + \sqrt{t-s} z^2) \frac{e^{-z^2/4}}{4\pi (t-s)} dzds \\
    &=\int_0^t\frac{2\sqrt{s}}{\sqrt{\pi(t-s)}}\int_{-\infty}^{\infty} \left( \frac{2x_1}{\sqrt{t-s}} z +  z^2 \right) \frac{e^{-z^2/4}}{4\pi} dzds\\
        &=\int_0^t\frac{2\sqrt{s}}{\pi\sqrt{(t-s)}}ds\\
        &=t.
\end{align*}
Thus, the net result is
\begin{align*}
    q(t,x_1) &= \underbrace{x_1^2 + t}_{\textrm{1d}} - \underbrace{t}_{\textrm{md}} = x_1^2,
\end{align*}
as required.

\subsection{Application: assessment of experiments}

In the case of the linear distribution of flux (\cref{sec:linearflux}), there will be lateral conduction effects since there is a spatial variation of surface temperature. However, these lateral conduction effects do not manifest as a contribution to the multi-dimensional conduction component. To assess the magnitude of multi-dimensional conduction effects in experiments, we need to model the surface heat flux using a nonlinear distribution, and the quadratic results (\cref{sec:quadraticflux}) provide a convenient approach.

From the above quadratic results, we see that if the scale-invariant flux is $q = C x_1^2$, where $C$ is some constant, then the multi-dimensional term will be $q_{md} = C t$.
We can use this result to estimate the significance of multi-dimensional effects in actual experiments if we equate the magnitude of the constant $C$ to the second derivative of surface heat flux in those experiments,
\begin{align*}
    C = \frac{d^2q}{dx_1^2} .
\end{align*}
We convert between scale-invariant quantities and scale-specific quantities, recognizing the scale-invariant time $t$ is related to scale-specific time $t_s$ via
\begin{align*}
    t = \frac{\alpha t_s}{L^2},
\end{align*}
where $\alpha$ is the thermal diffusivity (units, m$^2$/s) and $L$ is some suitable reference length (units, m) which provides the scale-invariant distance as 
\begin{align*}
    x = \frac{x_s}{L},
\end{align*}
where $x_s$ is the scale-specific distance (units, m). The scale-invariant heat flux $q$ is related to scale-specific quantities 
\begin{align*}
    q = \frac{q_s L}{k T_{i}} ,
\end{align*}
where $q_s$ is the scale-specific heat flux (units, W/m$^2$), $k$ is the thermal conductivity (units, W/m-K), and $T_i$ is the initial temperature of the solid. 

Based on the quadratic result in two dimensions (\cref{sec:quadraticflux}), we can estimate the magnitude of the multi-dimensional heat conduction effects from the local second derivative of the surface heat flux distribution, the thermal diffusivity, and the duration of the experiment using
\begin{align*}
    q_{s_{md}} &= \frac{d^2q_s}{d x_s^2} \alpha t_s ,
\end{align*}
and the relative error in the one-dimensional approach given by
\begin{align*}
    \text{(error)}_{1d} &= \frac{1}{q_s} \frac{d^2 q_s}{d x_s^2} \alpha t_s . 
\end{align*}

Taking the experiments by Zhang et al. \cite{zhang2021heat} as a specific illustration, we estimate that the lateral features in their figure 4 (wavy wall case) have a wavelength of about 5\,mm with peaks and troughs in heat flux being around 4.5 and 3.0 kW/m$^2$, respectively. For these conditions we estimate the largest magnitude of second derivative of the heat flux to be about $1.2\times10^6$\,kW/m$^4$. The model in their experiments is made from PEEK, for which we take the thermal diffusivity to be $\alpha = 1.9 \times 10^{-7}$\,m$^2$/s, and the duration of the experiment is 12\,s, so we estimate $ q_{s_{md}} = 2.8$\,kW/m$^2$. The estimated average relative error in the one-dimensional analysis due to multi-dimensional conduction effects is therefore around $(2.8)/(3.75)= 0.75$, so the one-dimensional approach clearly becomes unreliable by the end of their experiment. Accommodating the multi-dimensional conduction effects in future work will greatly improve the accuracy of their results.

\section{Illustration for discretised distributions}
\label{sec:discretised}

\subsection{Pixel-based data}

In transient heat transfer experiments, surface temperature measurements can be acquired using optical techniques including thermochromic liquid crystals, temperature sensitive paint, and thermal IR methods. In all cases, the surface signals will be sampled discretely at picture elements, \emph{pixels}. The counts (digital units) acquired at each pixel will be related to a temperature value via a suitable calibration, and each pixel will represent a certain area of the surface, depending on the optical projection. Arrays of values that are acquired within a relatively short period of time generate a \emph{frame} within a video sequence that is typically sampled at a constant rate (frames/second) for any given experiment.

In the present work, we consider the mapping from the physical surface to the imaging sensor to be constant: each pixels represents a specific area of the surface, and this area is constant for all pixels and is square. In practice, optical distortion effects due to surface curvature and lens effects will be present, but can be accommodated through calibration. The region of pixels mapped onto a subset of the semi-infinite surface that we consider is represented in \cref{fig:pixel_interpretation}. The edge length of each pixel in the surface-projections is $\Delta$. The location of interest is $x = (x_1,x_2)$, and the task is to determine the effects that all other pixels, of which $y_p = (y_{1p},y_{2p})$ is representative, will have on the results at $x$.

\begin{figure}[htp]
    \centering
    \includegraphics[width=79mm]{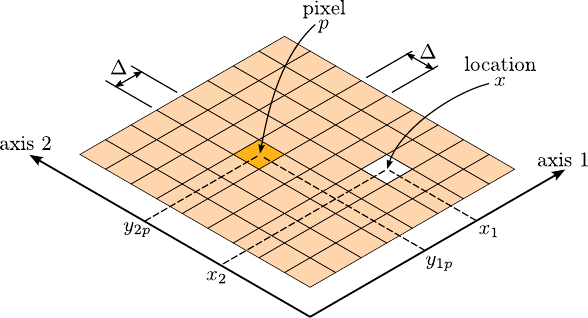} 
    \caption{Illustration of the surface on which heat transfer occurs discretized using pixels.}
    \label{fig:pixel_interpretation}
\end{figure}

\subsection{Transforming from flux to temperature}
\label{sec:flux-to-temperature}

For three dimensional heat conduction, the flux-to-temperature results \cref{NtoD} for the location of interest $x$ can be written
\begin{align*}
   g(t,x) & = \underbrace{\int_0^{t}  \frac{q(s,x)}{(\pi (t-s))^{1/2} } ds}_{\textrm{one-dimensional component, $g_{1d}$}} + \underbrace{\int_0^{t} \int_{-\infty}^{\infty} \int_{-\infty}^{\infty}   (q(s,y) - q(s,x)) \frac{\text{exp} \left(- \frac{|{x}-{y}|^2}{4(t-s)} \right)} {4(\pi (t-s))^3/2}  dy_1 dy_2 ds}_{\textrm{multi-dimensional component, $g_{md}$}} . \\  
\end{align*}
Since we assume the one-dimensional component is readily analysed with existing methods, the multi-dimensional component of the surface temperature is of primary interest and can be written
\begin{align*}
    g_{md}(t,x) &= \int_0^{t} \int_{-\infty}^{\infty} \int_{-\infty}^{\infty}   (q(t,y) - q(t,x)) \frac{\textrm{e}^{-((x_1-y_1)^2+(x_2-y_2)^2)/4(t-s)}}{4 (\pi (t-s))^{3/2}}  dy_1 dy_2 ds .
\end{align*}
We are considering square pixels of edge length $\Delta$, and we approximate the flux as spatially uniform over each pixel, so the multi-dimensional component of the temperature change for the location of interest can be approximated as  
\begin{align*}
    g_{md}(t,x)  & = \sum_{p = 1}^{n_p} \int_0^t  (q(s,y_p)-q(s,x)) \int_{y_{2p}-\frac{\Delta}{2} }^{y_{2p}+\frac{\Delta}{2}} \int_{y_{1p}-\frac{\Delta}{2}}^{y_{1p}+\frac{\Delta}{2}} \frac{e^{-((x_1-y_1)^2+(x_2-y_2)^2)/4(t-s)}}{4(\pi(t-s))^{3/2}} dy_1 dy_2 ds .
\end{align*}
where the summation is taken over the required pixels of which there is a total number $n_p$. It is convenient to define the following integral, 
\begin{align}
\label{eq:Ipixel}
I(t,z,z_p,\delta) & := \int_{z_{p}-\frac{\delta}{2} }^{z_{p}+\frac{\delta}{2}} \frac{e^{-(z-z_p)^2/4t}}{\sqrt{\pi t}} dz_p = \textrm{erf}\left( \frac{z-z_p+\frac{\delta}{2}}{\sqrt{4t}} \right) - \textrm{erf}\left( \frac{z-z_p-\frac{\delta}{2}}{\sqrt{4t}} \right) ,
\end{align}
so we have the multi-dimensional temperature effect as
\begin{align}
\label{eq:g_md}
    g_{md}(t,x)& =  \sum_{p = 1}^{n_p} \int_0^t  \underbrace{\left( q(s,y_p)-q(s,x) \right)}_{\textrm{forcing function}} \underbrace{\frac{I(t-s,x_1,y_{1p},\Delta) I(t-s,x_2,y_{2p},\Delta)}{2\sqrt{\pi} (4(t-s))^{1/2}}}_{\textrm{pixel-based impulse response function, $H_g$}} ds .
\end{align}
This expression indicates that the multi-dimensional temperature effects at any location of interest $x$ are obtained by convolving the relative heat flux at other pixels $q(s,y_p)-q(s,x)$ with the pixel-based temperature impulse response function $H_g$ and performing a summation across pixels of interest. 
The above expression was developed for the case of three-dimensional conduction, so in this case we write the three-dimensional version of the impulse response function,
\begin{align}
\label{eq:Hg3}
    H_{g3}(t,x,y_p,\Delta) &= \frac{1}{2 \sqrt{\pi}} \frac{I(t,x_1,y_{1p},\Delta) I(t,x_2,y_{2p},\Delta)}{(4t)^{1/2}},
\end{align}
and note that in the two-dimensional case, the relevant impulse response function is,
\begin{align}
\label{eq:Hg2}
    H_{g2}(t,x,y_p,\Delta) &= \frac{1}{\sqrt{\pi}} \frac{I(t,x,y_{p},\Delta)}{(4t)^{1/2}}.
\end{align}

Note that $H_{g3}$ and $H_{g2}$ have finite values at all times, provided $y_p \ne x$. The summation shown in \cref{eq:g_md} is taken across all pixels, but it does not actually need to include the case $y_p=x$ because for this location, the forcing function $q(s,y_p)-q(s,x)=0$, so there is zero contribution here. Thus, the infinite value of the impulse response functions for $y_p=x$ at $t=0$ is of no consequence.
A selection of the three-dimensional impulse response functions for converting from heat flux to temperature $H_{g3}$ are illustrated in \cref{fig:impulse_response_functions}(a) for the three pixels closest to $x = 0$. In all cases, the magnitude of the response is zero at $t=0$, consistent with the expectation that heat flux delivered at pixels $y_p \ne x$, will not an immediate effect on the temperature rise as $x$. The impulse response due to flux at the nearest pixel reaches a maximum value at the pixel-scaled time of $4t/\Delta^2 = 0.4304$ and has a value of $\Delta H_{g3} = 0.1730$. The corresponding results for the two-dimensional case are $4t/\Delta^2 = 0.7054$ with a maximum value of $\Delta H_{g2} = 0.2608$.

\begin{figure}[H]
    \centering
    \includegraphics[width=79mm]{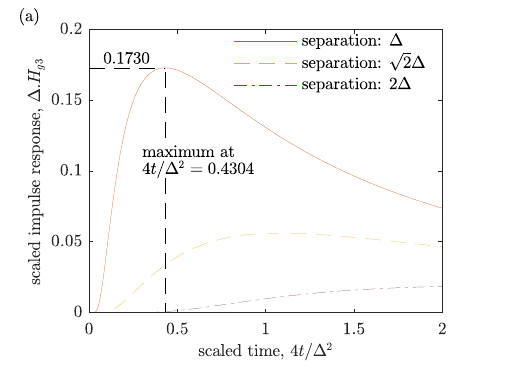}
    \includegraphics[width=79mm]{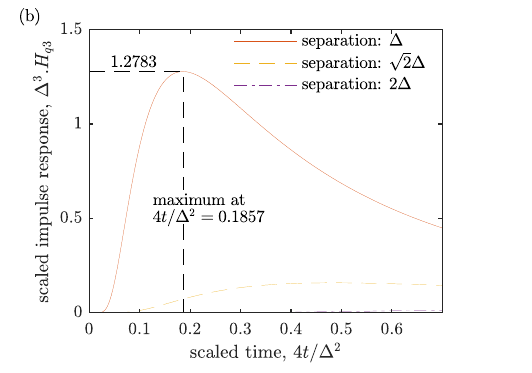}
    \caption{Illustration of pixel-scaled impulse response functions for the three closest elements in the three-dimensional conduction case: (a) flux-to-temperature, $\Delta . H_{g3}$; and (b) temperature-to-flux, $\Delta^3.H_{q3}$.}
    \label{fig:impulse_response_functions}
\end{figure}

\subsection{Transforming from temperature to flux}
\label{sec:temperature-to-flux}

We can also develop analogous expressions for the case of converting from temperature to heat flux. For the case of three-dimensional heat conduction, commencing with \cref{DtoN}, we can write the multi-dimensional component of heat flux is given by 
\begin{align*}
    q_{md}(t,x)  = \int_0^t \int_{-\infty}^{\infty} \int_{-\infty}^{\infty} (g(s,y)-g(s,x))\frac{e^{-((x_1-y_1)^2+(x_2-y_2)^2)/4(t-s)}}{(4\pi)^{3/2}(t-s)^{5/2}} dy_1 dy_2 ds .
\end{align*}
We treat the temperature $g(t,y)$ as uniform over each individual pixel $p$, and each pixel is considered to be square and to represent a distance along the surface of $\Delta$ in each axis direction. Under these conditions, the multi-dimensional component of heat flux for the location of interest is  
\begin{align*}
    q_{md}(t,x)  & = \sum_{p = 1}^{n_p} \int_0^t  (g(s,y_p)-g(s,x)) \int_{y_{2p}-\frac{\Delta}{2} }^{y_{2p}+\frac{\Delta}{2}} \int_{y_{1p}-\frac{\Delta}{2}}^{y_{1p}+\frac{\Delta}{2}} \frac{e^{-((x_1-y_1)^2+(x_2-y_2)^2)/4(t-s)}}{(4\pi)^{3/2}(t-s)^{5/2}} dy_1 dy_2 ds .
\end{align*}
Using the same definition for the integral $I$ as for \cref{eq:Ipixel}, we can write,  
\begin{align}
\label{eq:q_md}
    q_{md}(t,x)& =  \sum_{p = 1}^{n_p} \int_0^t  \underbrace{\left( g(s,y_p)-g(s,x) \right)}_{\textrm{forcing function}} \underbrace{\frac{I(t-s,x_1,y_{1p},\Delta) I(t-s,x_2,y_{2p},\Delta)}{\sqrt{\pi} (4(t-s))^{3/2}}}_{\textrm{pixel-based impulse response function, $H_q$}} ds .
\end{align}
The contribution of any pixel $p$ to the multi-dimensional component of heat flux at location $x$ is obtained through convolution of the temperature difference between these two locations and the impulse response function, which for three-dimensional conduction is
\begin{align}
\label{eq:Hq3}
    H_{q3}(t,x,y_p,\Delta) &= \frac{1}{\sqrt{\pi}} \frac{I(t,x_1,y_{1p},\Delta) I(t,x_2,y_{2p},\Delta)}{(4t)^{3/2}}.
\end{align}
and in the two-dimensional case, the impulse response function is
\begin{align}
\label{eq:Hq2}
    H_{q2}(t,x,y_p,\Delta) &= \frac{2}{\sqrt{\pi}} \frac{I(t,x,y_{p},\Delta)}{(4t)^{3/2}} .
\end{align}

Results are presented in \cref{fig:impulse_response_functions}(b) for transforming from temperature to flux in the case of three-dimensional conduction $H_{q3}$. The necessary impulse response function all have a finite response, and are zero initially. The impulse response from the nearest pixel reaches a maximum at $4t/\Delta^2 = 0.1857$ and has a value of $\Delta^3 H_{g3} = 1.2783$. The corresponding results for the two-dimensional case are $4t/\Delta^2 = 0.2140$ with a maximum value of $\Delta^3 H_{q2} = 1.4404$.

\subsection{Quadratic flux}
\label{sec:discrete_quadratic}

To illustrate the correct recovery of the multi-dimensional heat conduction effects for discrete data, we first consider the case of a quadratic flux distribution in two-dimensions $q = x^2$, since we have complete analytical results for the continuous distribution in this case (\cref{sec:quadraticflux}). For the present discrete analysis, we consider only the portion of the distribution $-6.25 \le x \le 6.25$ discretized into 51 evenly-distributed points. \cref{fig:quadratic_q_and_g}(a) presents the flux over the region of interest with linear interpolation between the 51 points so that $\Delta = 0.25$, making the flux distribution appear continuous in this representation; the inset plot illustrates the discretization. The analytical result for the temperature $g(t,x)$ (\cref{sec:quadraticflux}) is illustrated in \cref{fig:quadratic_q_and_g}(b) over the same region for selected values of time, $t = 0.25$, $0.5$ and $1$. 

\begin{figure}[H]
    \centering
    \includegraphics[width=79mm]{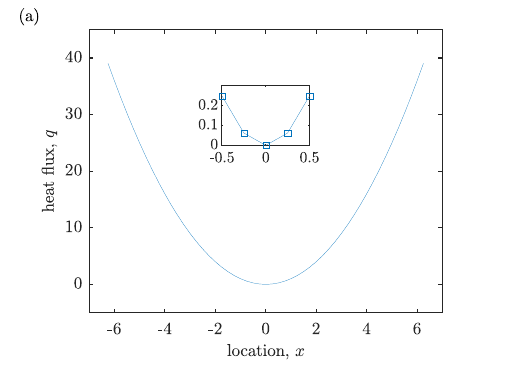}
    \includegraphics[width=79mm]{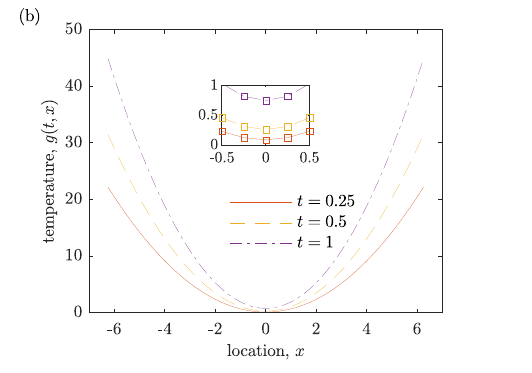} 
    \caption{Time-invariant quadratic heat flux distribution (part a) and illustration of temperature variations produced therefrom (part b).}
    \label{fig:quadratic_q_and_g}
\end{figure}

The maximum time considered herein is $t=1$, and the chosen time step size is $(\Delta^2)(0.2140/8)$. Note that from \cref{sec:temperature-to-flux}, the two-dimensional impulse response function $H_{g2}$ for the nearest element has a maximum at $t = (\Delta^2)(0.2140/4)$. Therefore, the discrete version of the impulse response function used herein will correctly resolve the maximum, and will also include one non-zero point prior to the peak. Correctly resolving the maximum value of the impulse response function for the nearest element has a significant impact on the accuracy of the multi-dimensional heat flux calculation, and including one additional point prior to the peak further improves accuracy. However, further increases in temporal resolution typically provide little, if any further gain in accuracy. 

Evaluation of the one-dimensional component of the heat flux proceeds by treating the local temperature rise $g(t,x)$ using some suitable method, but to evaluate the application of the multi-dimensional result for obtaining heat flux from temperature as presented in \cref{eq:q_md}, temperature differences $g(t,y)-g(t,x)$ are required. For the time being, if we suppose the point of interest is $x = 0$, then the temperature difference results are presented in \cref{fig:quadratic_DT}(a) for several times as a function of location, and in \cref{fig:quadratic_DT}(b) for several locations as a function of time. Results illustrating the one-dimensional component of heat flux obtained from $g(t,x)$ are illustrated in \cref{fig:quadratic_heatflux}(a), and these were obtained using an impulse response filtering method as suggested by Oldfield \cite{Oldfield2008}. Results illustrating the multi-dimensional component of heat flux obtained from the convolution of $g(t,y)-g(t,x)$ and $H_{q2}$ and the summation of results across the 51 elements are presented in \cref{fig:quadratic_heatflux}(b).

\begin{figure}[H]
    \centering
    \includegraphics[width=79mm]{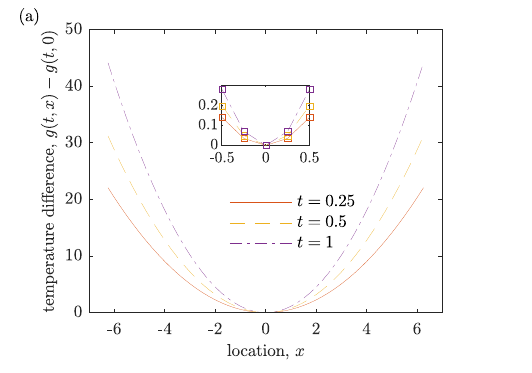}
    \includegraphics[width=79mm]{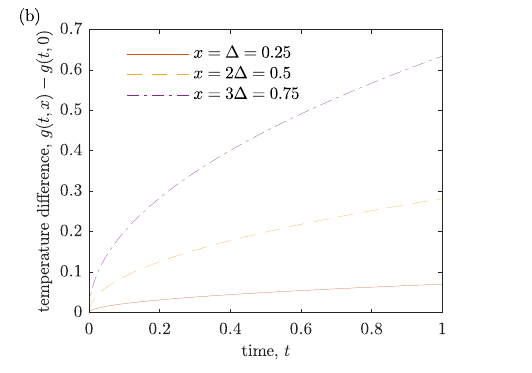} 
    \caption{Temperature difference $g(t,x)-g(t,0)$ for quadratic heat flux case: (a) variation with distance for three times; (b) variation with time for three locations.}
    \label{fig:quadratic_DT}
\end{figure}

\begin{figure}[H]
    \centering
    \includegraphics[width=79mm]{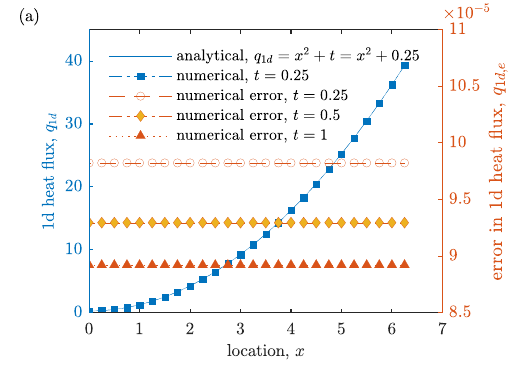} 
    \includegraphics[width=79mm]{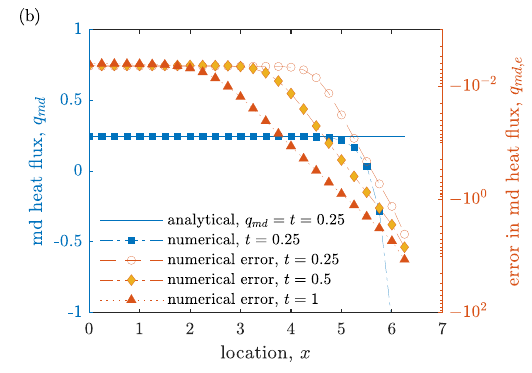} 
    \caption{Components of recovered heat flux for quadratic case: (a)~one-dimensional component; (b)~multi-dimensional component.}
    \label{fig:quadratic_heatflux}
\end{figure}

The errors in the treatment of the one-dimensional component are modest; for the times illustrated in \cref{fig:quadratic_heatflux}(a) the errors in the one-dimensional component are less than $1\times10^{-4}$. Smaller errors in the one-dimensional component can be obtained, depending on the time-resolution of the treatment, but the error values achieved presently are sufficiently small to have a negligible impact on the overall recovery of the heat flux, since the magnitude of the errors in the multi-dimensional component are somewhat larger, as illustrated in \cref{fig:quadratic_heatflux}(b), where the magnitude of the errors close to $x=0$ are around $4\times10^{-3}$. The magnitude of the errors in the multi-dimensional component at any given time are largest at the extremity of the region of interest, ie at $x = \pm 6.25$, and these errors grow with time and become evident closer to the centre of the distribution. The effects of the limited spatial domain being considered are impacting the results in this case.

The overall heat flux result $q = q_{1d} - q_{md}$ is illustrated in \cref{fig:quadratic_qfinal}, where it is demonstrated the time-invariant heat flux that was specified is successfully recovered to a high degree of accuracy, apart from edge effects associated with the finite spatial domain which manifest more prominently at later times and at locations closer to the $x = \pm 6.25$. The error in the overall heat flux arises primarily because of the multi-dimensional treatment, and for the results illustrated in \cref{fig:quadratic_qfinal}, the error in the multi-dimensional component at $(t,x) = (1,0)$ was $4.01\times 10^{-3}$, which is a small quantity relative to the maximum heat flux which was $q_{\text{max}} = x_{\text{max}}^2 \approx 39$. Thus the relative error in the multi-dimensional heat flux amounts to about 0.01\,\%; such a value is insignificant compared to the magnitude of uncertainties that typically arise in transient heat flux experiments. Nevertheless, the error in the multi-dimensional heat flux component can be made smaller if required by using a finer spatial resolution. For example, in the above results, a total of 51 points distributed evenly within $-6.25 \le x \le 6.25$ were used, but if instead we used 11, 21, 41, or 81 points, giving $\Delta = 1.25$, 0.625, 0.3125, or 0.1563, the associated magnitude of error in the multi-dimensional component at $(t,x) = (1,0)$ is calculated to be $109.9\times 10^{-3}$, $28.00\times 10^{-3}$, $6.559\times 10^{-3}$, or $1.333\times 10^{-3}$, respectively. For each different value of $\Delta$, the calculated error results were obtained with the time step defined using $(\Delta^2)(0.2140/8)$.

\begin{figure}[H]
    \centering
    \includegraphics[width=79mm]{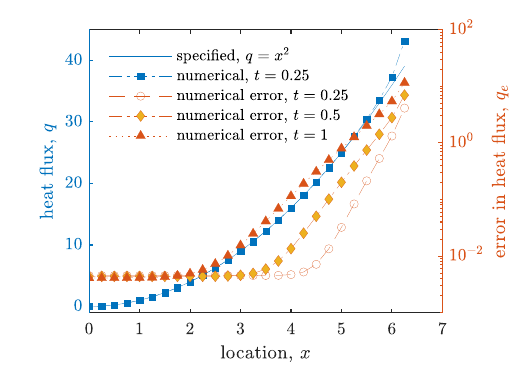} 
    \caption{Heat flux recovered from temperature distribution through numerical treatment for quadratic heat flux case.}
    \label{fig:quadratic_qfinal}
\end{figure}

\subsection{Gaussian flux}
\label{sec:discrete_gaussian}

As a companion to the demonstration of the discretized quadratic flux distribution of \cref{sec:discrete_quadratic}, we perform a similar analysis for the case of a two-dimensional Gaussian distribution, $q(t,x) = e^{-x^2}$. The zone of analysis in a numerical treatment necessarily has a finite size, so having the flux approach zero for sufficiently large $x$ offers an advantage relative to the quadratic distribution. However, unlike the quadratic distribution, we do not have an analytical expression for the variation of surface temperature $g(t,x)$. Nevertheless, we do have an expression that can be numerically integrated to deliver the temperature (see \cref{sec:gaussian_flux_results}) without recourse to the convolution method in \cref{sec:flux-to-temperature}, 
\begin{align*}
    g(t,x) &= \frac{\text{e}^{-x^2}}{\sqrt{\pi}}  \int_0^t \frac{1}{\sqrt{4 s^2 + s}} \text{exp}\left( \frac{4 x^2 s }{4 s +  1} \right)  ds.
\end{align*}

The discretized version of the Gaussian flux distribution that is used in this work is illustrated in \cref{fig:gaussian_q_and_g}(a), although the full region used in the analysis which was actually 
$-6.25 \le x \le 6.25$, as was the case in \cref{sec:discrete_quadratic}. Over the region  $-6.25 \le x \le 6.25$, the distribution was discretized using 51 points, giving $\Delta = 0.25$. Temporal discretization for the Gaussian analysis was the same as that for the quadratic as described in \cref{sec:discrete_quadratic}. Selected results from the numerical integration of the above expression are presented in \cref{fig:gaussian_q_and_g}(b). The temperature distributions $g(t,x)$ are used in the analysis that delivers the one-dimensional conduction effects, but the temperature differences relative to each point of interest $g(t,y)-g(t,x)$ are the quantities needed for the analysis of multi-dimensional conduction effects. A selection of temperature difference results are presented in \cref{fig:gaussian_DT} for the location of interest being $x=0$. 
 
\begin{figure}[H]
    \centering
    \includegraphics[width=79mm]{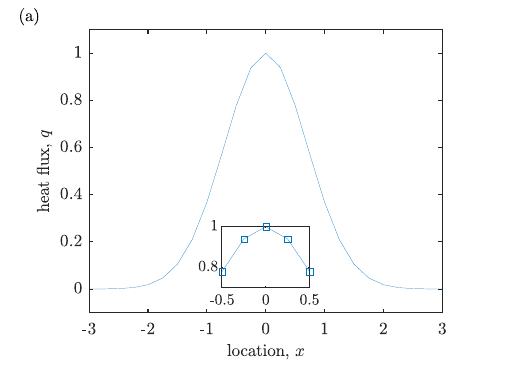}
    \includegraphics[width=79mm]{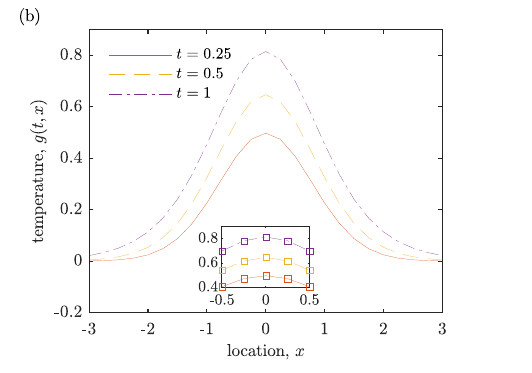} 
    \caption{Time-invariant Gaussian heat flux distribution (part a) and illustration of temperature variations produced therefrom (part b).}
    \label{fig:gaussian_q_and_g}
\end{figure}

\begin{figure}[H]
    \centering
    \includegraphics[width=79mm]{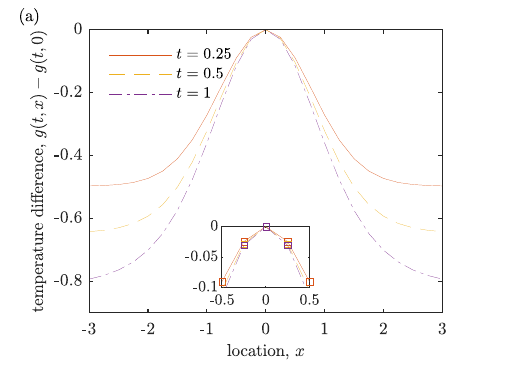}
    \includegraphics[width=79mm]{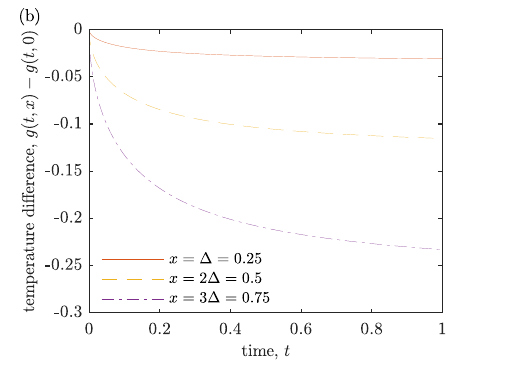} 
    \caption{Temperature difference $g(t,x)-g(t,0)$ for Gaussian heat flux case: (a) variation with distance for 3 times; (b) variation with time for 3 locations.}
    \label{fig:gaussian_DT}
\end{figure}

Results illustrating the one- and multi-dimensional components of the heat flux deduced from treatment of $g(t,x)$ and $g(t,y)-g(t,x)$, respectively, are presented in \cref{fig:gaussian_heatflux}. In the case of the one-dimensional component, \cref{fig:gaussian_heatflux}(a), a version of the impulse response filtering method of Oldfield \cite{Oldfield2008} was again used, and the magnitude of the maximum errors in the numerical results at the first time step (labeled $t=+0$ in the figure) are around $1\times10^{-6}$. The maximum heat flux inferred using the one-dimensional analysis reduces with time, although beyond $x = 1$ the deduced heat flux increases with time, and both of these effects illustrate the limitations of the one-dimensional approach because the actual flux is time-invariant in this case. The multi-dimensional component of the heat flux should precisely compensate for the failings in the one-dimensional component, and turning attention to \cref{fig:gaussian_heatflux}(b) we can see this is likely to be the case. Errors in the numerical treatment of both the one- and multi-dimensional components cannot be assessed separately in the case of the Gaussian, because unlike the quadratic case, we do not have analytical expressions for $q_{1d}$ and $q_{md}$.

\begin{figure}[H]
    \centering
    \includegraphics[width=79mm]{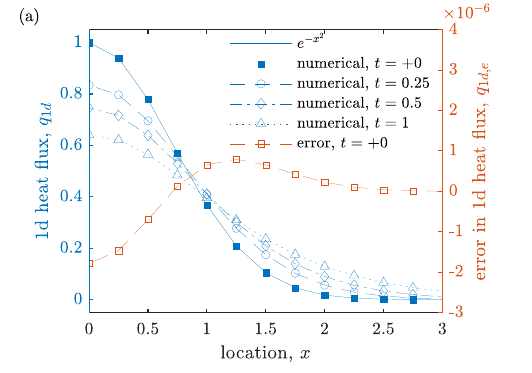} 
    \includegraphics[width=79mm]{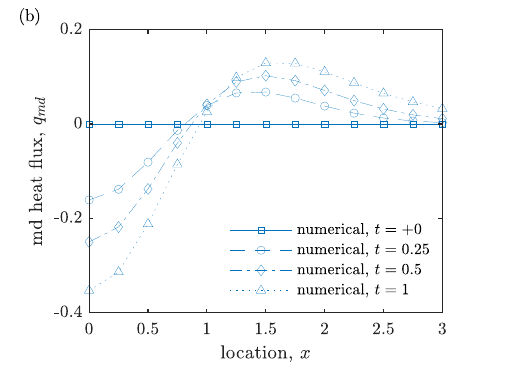} 
    \caption{Components of recovered heat flux for Gaussian case: (a)~one-dimensional component; (b)~multi-dimensional component.}
    \label{fig:gaussian_heatflux}
\end{figure}

The overall results for the heat flux $q = q_{1d} - q_{md}$ are presented in \cref{fig:gaussian_qfinal}, which illustrates that the time-invariant heat flux distribution can be recovered with the maximum magnitude of errors being less than approximately $5\times10^{-3}$. The current discretization was achieved with 51 points distributed across $-6.25 \le x \le 6.26$, giving $\Delta = 0.25$. Such a level of accuracy, which amounts to about 0.5\,\% relative to the maximum heat flux (unity, in this case) would certainly be sufficient for transient heat flux experiments where achieving uncertainties less than 1\,\% is very unlikely. Nevertheless, if higher accuracies are required, then finer levels of discretisation can be used. For example for 41, 81, and 161 points evenly distributed across the interval, the associated magnitude of error in the overall heat flux at $(t,x) = (1,0)$ is calculated to be $5.599\times 10^{-3}$, $1.648\times 10^{-3}$, and $0.3627\times 10^{-3}$, respectively.

\begin{figure}[H]
    \centering
    \includegraphics[width=79mm]{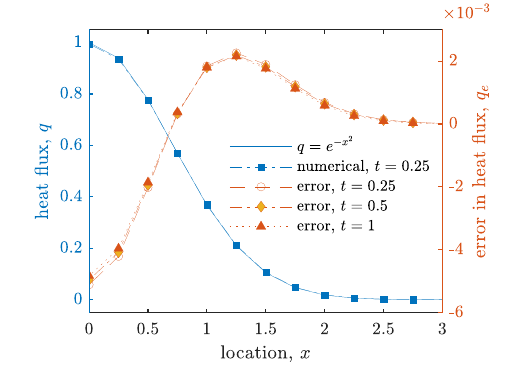} 
    \caption{Heat flux recovered from temperature distribution through numerical treatment for Gaussian heat flux case.}
    \label{fig:gaussian_qfinal}
\end{figure}

\section{Heat flux in a turbulent wedge}

The previous section introduced the approach for treatment of discetized distributions and illustrated application of the method in the case of quadratic and Gaussian distributions, both of which were two-dimensional. In this section we illustrate the application of the treatment to three-dimensional analysis using an empirical model for surface heat flux distribution in a high-speed turbulent flow. We simulate surface temperature data via the convolution method because there are no precise analytical treatments that deliver the surface temperature, unlike the quadratic and Gaussian illustrations in \cref{sec:discretised}. The value of using simulated surface temperature data rather than actual experimental data lies in the fact that there is no experimental uncertainty: we should be able to recover precisely the specified heat flux, apart from numerical errors. 

The chosen heat flux distribution is for a high speed boundary layer on a flat plate with an isolated protuberance that induces transition from laminar to turbulent flow. Examples of turbulent wedges that arise in such configurations are found in the review by Schneider \cite{schneider2008}. The presentation proceeds by: (1) introducing an empirical model for the heat flux distribution; (2) specifying relevant flow properties in a representative hypersonic wind tunnel; (3) specifying the thermal properties of the imagined wind tunnel model; and (4) specifying relevant characteristics of the supposed thermal imaging system. The simulated surface temperature distribution is obtained from the flux-to-temperature results; representative levels of temperature measurement noise are also superimposed on the simulated temperatures. The temperature-to-flux method is then applied to the simulated temperatures to determine the accuracy with which the heat flux distribution can be recovered.   

\subsection{Empirical heat flux model}
\label{sec:emp_wedge_model}

If a protuberance on the surface causes the boundary layer to transition from laminar to turbulent flow, the lateral spreading angle of the wedge $\phi$ (a half-angle, expressed in degrees) within which the flow becomes turbulent is approximated from results in Fischer \cite{fischer1972} using the correlation with the Mach number external to the boundary layer M$_e$ given by
\begin{align*}
    \phi = -0.014\,\textrm{M}_e^3 + 0.30\,\textrm{M}_e^2 -2.5\,\textrm{M}_e + 10 ,
\end{align*}
which appears reasonable for $\textrm{M}_e \le 8$. 

In the laminar portions of the boundary layer, the friction coefficient is taken as
\begin{align*}
    c_{f,L} = \frac{0.664}{\left( \textrm{Re}_x^* \right)^{1/2}}
\end{align*}
and in the turbulent portions, the friction coefficient is taken as (see \cite{White2005})
\begin{align*}
    c_{f,T} = \frac{0.455}{\textrm{log}_e^2\left( 0.06 \textrm{Re}_x^* \right)}.
\end{align*}
In both the laminar and turbulent zones, the Stanton number is evaluated from the friction coefficient via the Reynolds analogy according to
\begin{align*}
    \textrm{St} = \frac{c_f}{2 (\textrm{Pr}^*)^{2/3}}.
\end{align*}
In the above expressions, the Reynolds and Prandtl numbers are given by
\begin{align*}
    \textrm{Re}_x^* = \frac{\rho^* u_e x_s}{\mu^*}
\end{align*}
and
\begin{align*}
    \textrm{Pr}^* = \frac{\mu^* c_p^*}{k^*} ,
\end{align*}
respectively. Asterisk properties in the above expressions are evaluated at the reference temperature $T^*$ given by \cite{Eckert1956}
\begin{align*}
     T^* = T_e + 0.50 (T_w - T_e) + 0.22 (T_{aw} - T_e),
\end{align*}
where $T_e$ is the boundary layer edge temperature, $T_w$ is the wall temperature, and $T_{aw}$ is the adiabatic wall temperature given by
\begin{align*}
    T_{aw} = T_e + \frac{r}{2 c_p} U_e^2 ,
\end{align*}
with $r$ being the recovery factor, which, for laminar flow is $r_L = (\textrm{Pr}^*)^{1/2}$ and for turbulent flow, is taken as $r_T = (\textrm{Pr}^*)^{1/3}$. The heat transfer at the surface $q_s$ in either the laminar or turbulent portions of the boundary layer is calculated from the Stanton number according to  \begin{align*}
  q_s = \textrm{St} \rho_e u_e c_p \left( T_{aw} - T_w \right).
\end{align*}

To define the heat transfer in the transition from laminar to turbulent flow, the lateral spreading angle $\phi$ is taken as defining the streamwise location of the nominal start of the transition process $x_t$, with the center of the transition being defined by
\begin{align*}
    x_c = x_t + \frac{x_{TL}}{2} ,
\end{align*}
where $x_{TL}$ is the nominal transition length that is taken to be 25\,\% of the streamwise distance to the transition onset $x_{t,0}$ that would occur in the absence of the protuberance,
\begin{align*}
    x_{TL} = 0.25 x_{t,0} .
\end{align*}
For smooth surfaces, a correlation for the variation of transition onset with Mach number external to the boundary layer on smooth sharp cones in supersonic and hypersonic wind tunnels is given by  
\begin{align*}
    \textrm{log}_{10} \left( \textrm{Re}_{x_{t,0}} \right) = 0.0116\,\textrm{M}_e^2 - 0.100\,\textrm{M}_e + 6.63 ,
\end{align*}
based on the data collated in \cite{Beckwith1975, BeckwithBertram1972}.  In the above expression, the Reynolds number at transition onset is defined in terms of the flow properties external to the boundary layer,
\begin{align*}
    \textrm{Re}_{x_{t,0}} = \frac{\rho_e u_e x_{t,0}}{\mu_e} .
\end{align*}
The hyperbolic tangent function is used to generate a smooth variation in the time-averaged heat flux throughout the transition from laminar to turbulent flow. The function is scaled so that the 1\,\% to 99\,\% rise from laminar to turbulent flux values occurs over the nominal transition length, $x_{TL}$. That is,
\begin{align*}
     q_{s} = q_{s,L} +  \left( 1+\tanh \frac{x_s-x_{c}}{x_\textrm{ref}} \right) \frac{q_{s,T} - q_{s,L}}{2},
\end{align*}
where $x_\textrm{ref}$ is a reference length given by,
\begin{align*}
     x_\textrm{ref} = \frac{x_{TL}}{2 \tanh^{-1}(0.98)} .
\end{align*}

\subsection{Specified heat flux}
\label{sec:windtunnel_heatflux}

The flow conditions selected for this work correspond to a Mach 6 flow of dry air generated isentropically from stagnation pressure and temperature of 900\,kPa and 550\,K, respectively. The ratio of specific heats for the air is assumed constant, $\gamma = 1.4$. Sutherland's law is used for the viscosity and conductivity of the air. We assume a flat plate is aligned with the Mach~6 flow direction with a protuberance located 140\,mm downstream from its leading edge. The temperature of the flat plat is treated as constant, $T_w = 293$\,K. Although the plate temperature will actually increase when exposed to the heat load, it is assumed that the flow duration is sufficiently short for the change in $T_w$ to be a small fraction the temperature difference $T_{aw}-T_w$ that drives the heat flux.   According to the model described in \cref{sec:emp_wedge_model}, the distribution of heat flux for these conditions is illustrated in \cref{fig:wedge_heatflux_image}(a). 

\begin{figure}[hbt!]
    \centering
    \includegraphics[trim={0 1.0cm 0 1.5cm},clip,width=79mm]{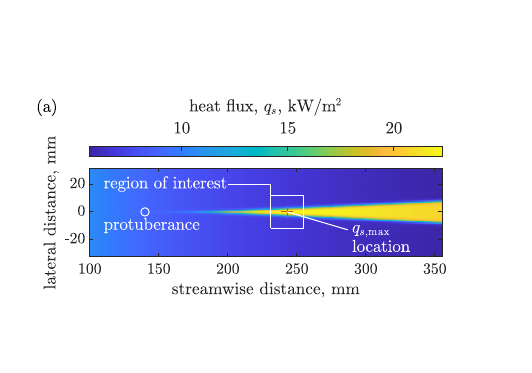} \\
    \includegraphics[trim={0 0.0cm 0 0cm},clip,width=79mm]{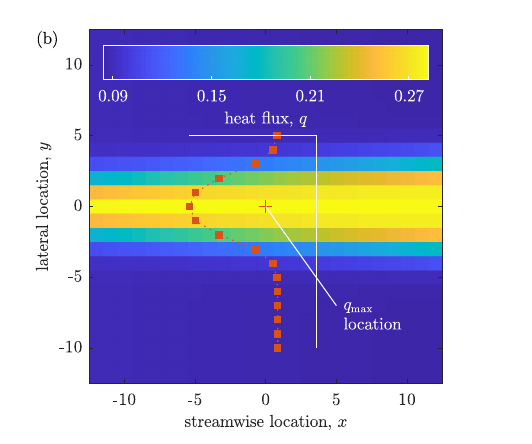} 
    \caption{Simulated heat flux distribution: (a) heat flux $q_s$ (engineering units, kW/m$^2$) within turbulent wedge; (b) scale-invariant heat flux $q$ within the region of interest.}
    \label{fig:wedge_heatflux_image}
\end{figure}

\subsection{Temperature rise from heat flux}
\label{sec:plate_temperature_rise}

The plate is assumed to be made of PEEK (polyetheretherketone) which is frequently used in high speed flow ground-testing applications (e.g., see Zhang et al. \cite{zhang2021heat}). PEEK has a relatively high emissivity, enabling surface temperature measurement via imaging with a thermal IR camera.  The following thermal properties are assumed for the PEEK: $\rho = 1300$\,kg/m$^3$; $ c = 1100$\,J/kgK; and $k = 0.27$\,W/Km, giving a value for thermal diffusivity of $\alpha = 1.89\times10^{-7}$\,m$^2$/s.

To acquire the plate surface temperature, it is assumed that an IR camera with a suitable lens system produces the view of the zone illustrated in \cref{fig:wedge_heatflux_image}(a) with 256 $\times$ 64 pixels. The lens arrangement is such that the surface viewing area of each pixel is 1\,mm square. The region of primary interest is the $25\times25$-pixel zone centered on the location of the maximum turbulent heat flux $q_{s,max}$ as illustrated in \cref{fig:wedge_heatflux_image}(a). The projected pixel size of 1\,mm provides the length scale $L = 1$\,mm, and the scale-invariant heat flux is  
\begin{equation*}
    q =  \frac{q_s L}{k \, T_w}.
\end{equation*}
Scale-invariant values of heat flux $q$ produced in this manner are illustrated in \cref{fig:wedge_heatflux_image}(b) for the $25 \times 25$ pixel region of interest centered on the maximum heat flux value. The maximum value of heat flux in this case is $q_{\text{max}} = 0.2818$. For analysis of actual experimental data, conversion to scale-invariant quantities is not an imperative, but given the convenience afforded and the extensive use of scale-invariant quantities in the earlier sections of this work, we continue with scale-invariant properties, but refer back to scale-specific properties when required.

Heat flux results in \cref{fig:wedge_heatflux_image}(b) were converted to temperature distributions using the convolution method described in \cref{sec:flux-to-temperature}. In the three-dimensional case, the impulse response function for converting from heat flux to temperature $H_{g3}$ has a maximum at $4t/\Delta^2 = 0.4304$ for the shortest separation of elements (\cref{fig:impulse_response_functions}), and for accurate analysis, a time resolution that at least matches this value is needed. However, because we subsequently want to convert from the temperature back into heat flux and the impulse response function for doing so ($H_{q3}$) reaches a maximum at an even shorter time, $4t/\Delta^2 = 0.1857$ (\cref{fig:impulse_response_functions}), we use a time step that allows three non-zero values in the temperature-to-flux impulse response prior to the peak in $H_{q3}$. (The choice of three non-zero points is somewhat arbitrary.) The time step size in this case is therefore $t = 0.1857/16 = 0.011606$. The corresponding time step size expressed in seconds is $t_s = t L^2/\alpha = 0.0614$\,s, requiring the IR camera in this case to have a frame rate of about 16\,Hz, which, for the specified number of pixels, is not demanding for current IR camera technology. 

Results from the conversion from flux to temperature are illustrated in \cref{fig:wedge_g_montage} for selected times up to $t=4$. In terms of actual experiment duration, the value $t=4$ corresponds to $t_s = t L^2/\alpha = 21.2$\,s, which is longer than the experiment reported by Zhang et al. \cite{zhang2021heat}, but is not an excessive time period and could be achieved in other blow-down facilities. Results in \cref{fig:wedge_g_montage} were obtained by first calculating the one-dimensional component of the temperature rise $g_{1d}$ as specified in \cref{NtoD}, and this was achieved at each pixel using the impulse response filtering method of Oldfield \cite{Oldfield2008}. The multi-dimensional component $g_{md}$ at each pixel was calculated by the convolution-then-summation method as described in \cref{sec:flux-to-temperature}. The net result was then produced, $g = g_{1d} + g_{md}$.  It is noted that the simulated temperatures in regions near the left and right vertical edges of the images are low relative to the values along the centre line of each image. This effect arises because the flux-to-temperature analysis considers only the $25 \times 25$ region of pixels illustrated; the analysis effectively treats surfaces outside of this region of interest as receiving no heat flux, causing the simulated temperatures to fall-off unrealistically near the edges. 

\begin{figure}[hbt!]
    \centering
    \includegraphics[trim={0cm 3.3cm 0 0.5cm},clip,width=79mm]{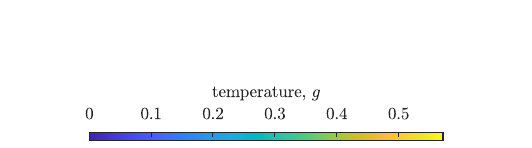} \\
    \includegraphics[width=40mm]{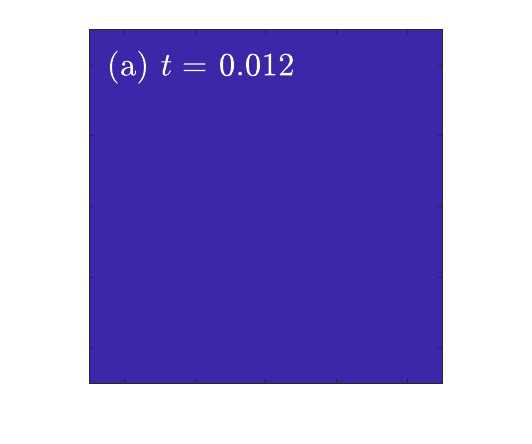}
    \includegraphics[width=40mm]{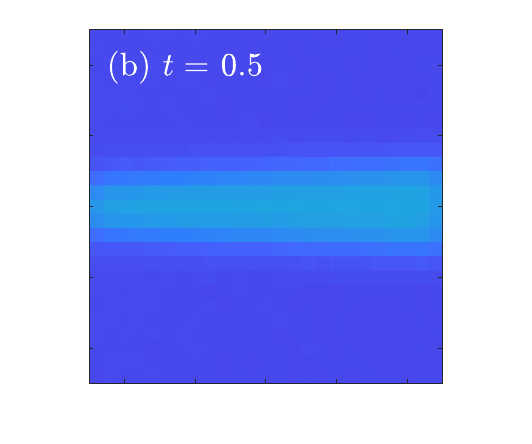} 
    \includegraphics[width=40mm]{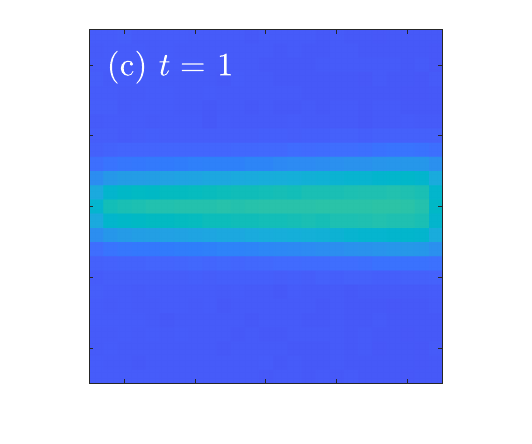} 
    \includegraphics[width=40mm]{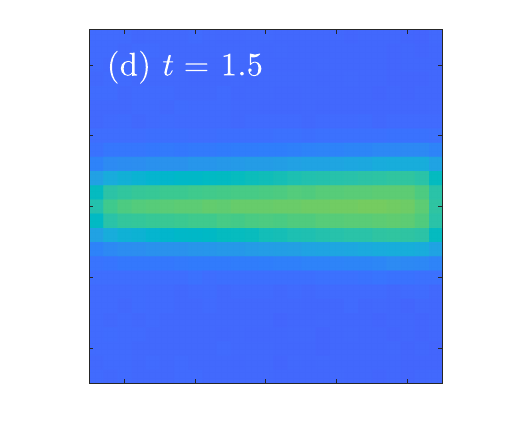} 
    \includegraphics[width=40mm]{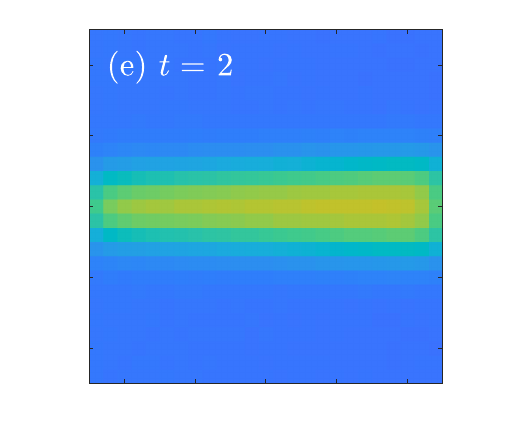} 
    \includegraphics[width=40mm]{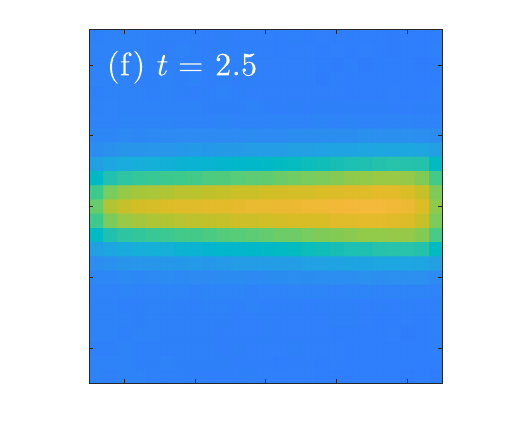} 
    \includegraphics[width=40mm]{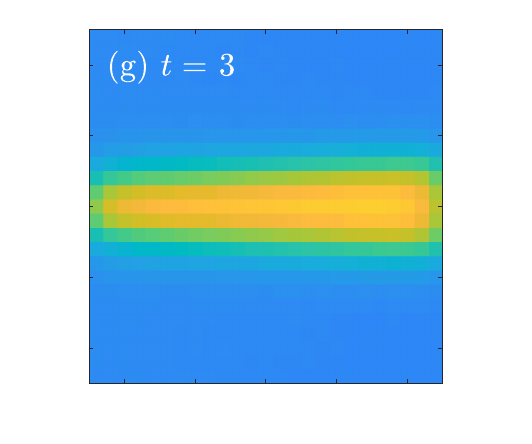} 
    \includegraphics[width=40mm]{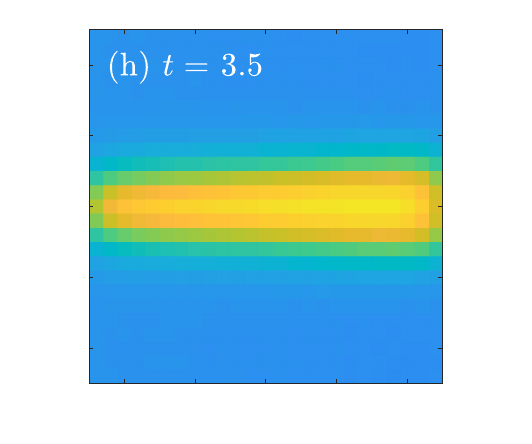} 
    \includegraphics[width=40mm]{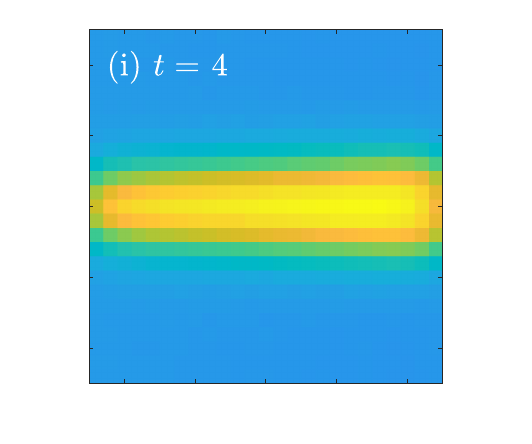} 
    \caption{Simulated temperature development within the region of interest.}
    \label{fig:wedge_g_montage}
\end{figure}

\cref{fig:wedge_g_history} illustrates the contributions from the one-dimensional and the multi-dimensional components of the analysis for the treatment of the pixel at the centre of the region of interest, coinciding with the location of the maximum heat flux. The multi-dimensional component is zero initially, so the combined result initially follows the one-dimensional component. At $t = 4$, the multi-dimensional component is $g_{md} = -0.078$ and the one-dimensional component is $g_{1d} = 0.636$, giving a combined result of $g = 0.558$ at this time. The combined temperature result is lower than the one-dimensional result because temperatures at all other locations are necessarily lower than at this point of maximum heat flux, so there will be effects that conduct heat away from this point, lowering its temperature relative to the one-dimensional result. 

Prior to conversion from temperature back into heat flux, a noise component $g'$ is added to the simulated temperature results, $g = g_{1d} + g_{md} + g'$. All actual temperature measurements will have some noise, and in this case, the simulated noise $g'$ is Gaussian with a standard deviation of $1\times10^{-3}$, a sample of which is included as the inset in \cref{fig:wedge_g_history}. The temperature simulations  illustrated in \cref{fig:wedge_g_montage} are free of such noise, but prior to the deduction of heat flux, noise of the aforementioned magnitude was added to the simulated temperature history for each pixel within the region of interest. For a surface temperature of 293\,K in this case, the standard deviation of the Gaussian noise represents a value of 0.293\,K. The temporal noise of IR camera systems is frequently described using a Noise Equivalent Temperature Difference (NETD) and values lower than 100\,mK are routinely achieved. Therefore, the noise level applied to the present simulated temperatures (standard deviation of 293\,mK) exceeds the noise values that should be achievable in practice by a significant margin.  

\begin{figure}[hbt!]
    \centering
    \includegraphics[width=79mm]{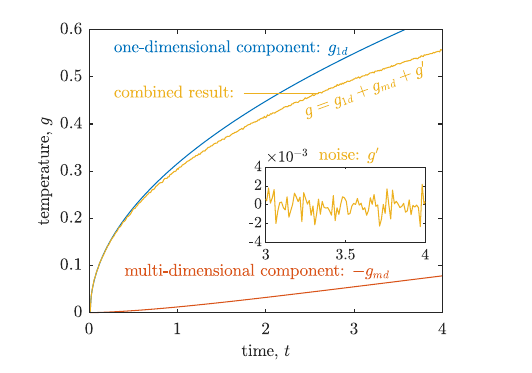} 
    \caption{Temperature history at the point of maximum heat flux, in the centre of the region of interest.}
    \label{fig:wedge_g_history}
\end{figure}

\subsection{Heat flux from temperature rise}

Heat flux within the region of interest was obtained from the temperature simulations with noise included, $g = g_{1d}+g_{md}+g'$, and results at selected times are presented in \cref{fig:wedge_q_montage}. The added noise on the temperature history for each pixel introduces significant fluctuations into the deduced heat flux. However, the overall effect is that the required heat flux, which is time-invariant according to simulated experimental conditions, can be recovered with reasonable accuracy if sufficient low-pass filtering of the results is applied. The mean value of the heat flux at the center pixel in \cref{fig:wedge_q_montage} over the period up to $t=4$ is 0.280, whereas the specified value at the center of this pixel is $q_\text{max} = 0.282$, an error of less than 1\,\%.

\cref{fig:wedge_q_history} presents the detailed history at the center pixel. The value of heat flux from the one-dimensional analysis $q_{1d}$ decreases with time, but the magnitude of the multi-dimensional component $q_{md}$ increases such that the combined result $q = q_{1d}-q_{md}$ is essentially constant, apart from the noise. 
The broken lines in \cref{fig:wedge_q_history} provide the heat flux results in the absence of noise in the temperature histories. 
The noise introduced to the temperature history primarily manifests in the heat flux through the one-dimensional component of the analysis $q_{1d}$; the multi-dimensional component $q_{md}$ delivers only a minor contribution to the overall noise. 
In the present analysis, the standard deviation of the noise that is manifested in the multi-dimensional heat flux component relative to the surface temperature noise is $q'_{md}/g' = 0.177$, whereas for the one-dimensional component it is $q'_{1d}/g' = 8.82$. Clearly, the multi-dimensional component of the heat flux contributes very little to the noise in the overall result.

\begin{figure}[hbt!]
    \centering
    \includegraphics[trim={0cm 3.3cm 0 0.5cm},clip,width=79mm]{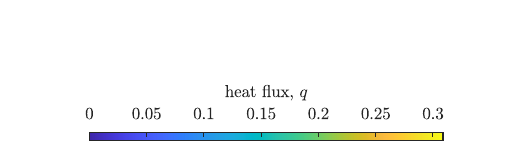} \\
    \includegraphics[width=40mm]{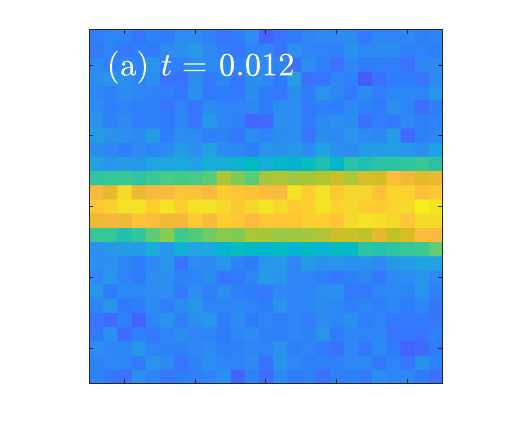}
    \includegraphics[width=40mm]{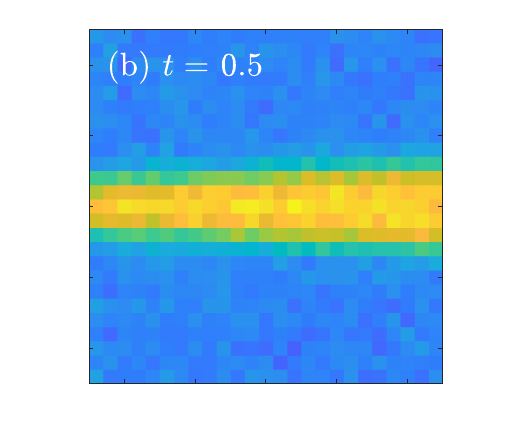} 
    \includegraphics[width=40mm]{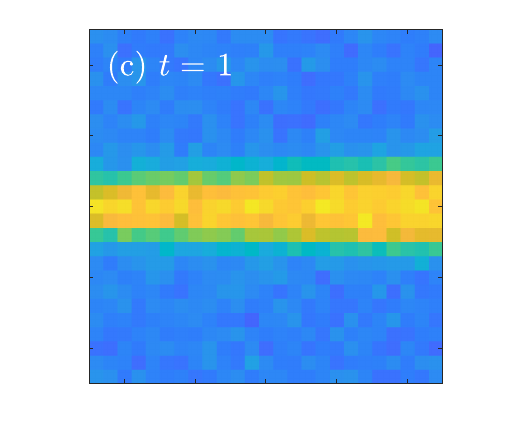} 
    \includegraphics[width=40mm]{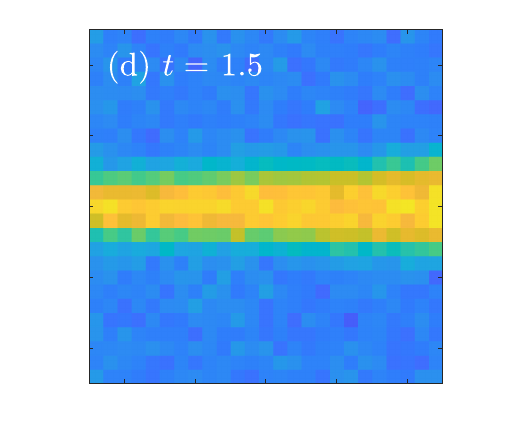} 
    \includegraphics[width=40mm]{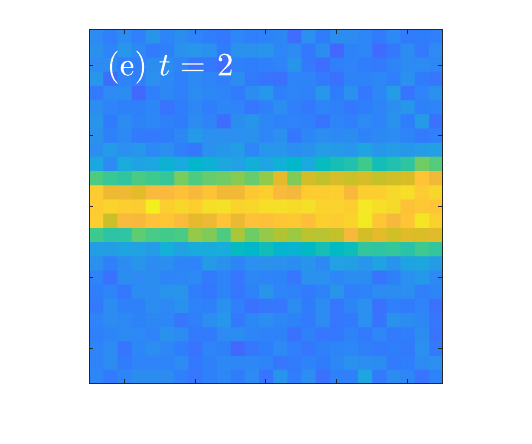} 
    \includegraphics[width=40mm]{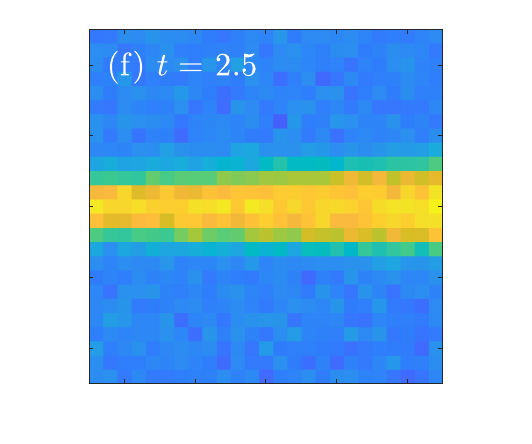} 
    \includegraphics[width=40mm]{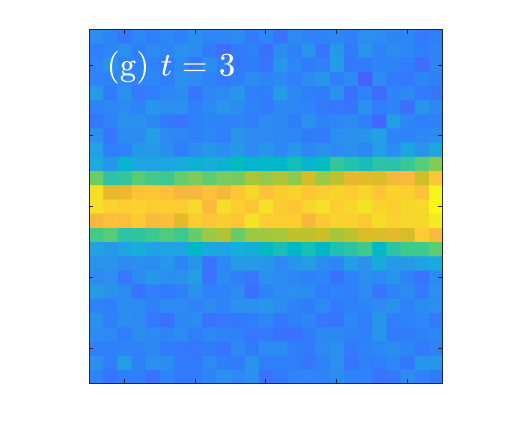} 
    \includegraphics[width=40mm]{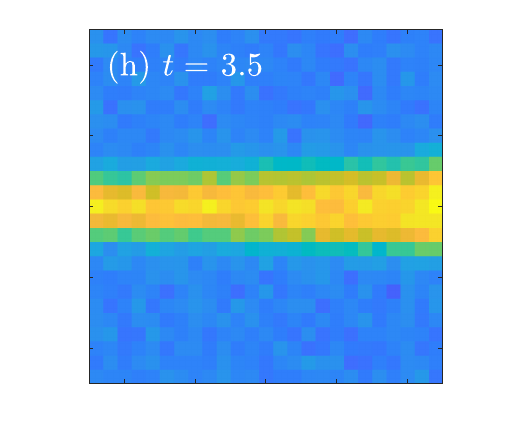} 
    \includegraphics[width=40mm]{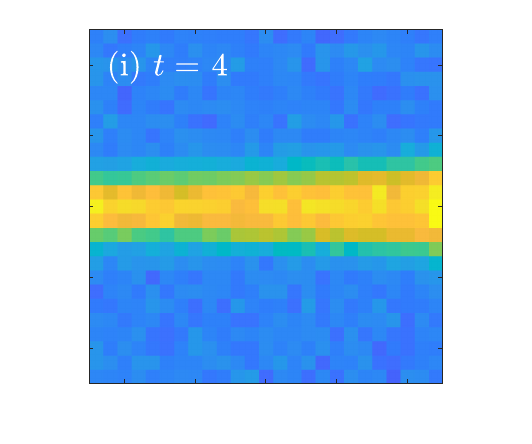} 
    \caption{Deduced heat flux development within the region of interest.}
    \label{fig:wedge_q_montage}
\end{figure}

\begin{figure}[hbt!]
    \centering
    \includegraphics[width=90mm]{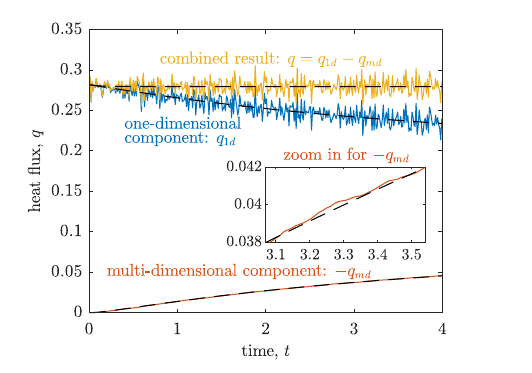} 
    \caption{Heat flux history at the point of maximum heat flux. Broken lines indicate results in the absence of temperature-history noise.}
    \label{fig:wedge_q_history}
\end{figure}

\section{Conclusion}

Multi-dimensional conduction effects will be present in transient heat transfer experiments that include regions with significant variations in surface heat flux, and such effects will generally become larger as the experiment duration increases.  Accurate treatments for multi-dimensional conduction effects have been reported in the literature, but such methods have not been universally adopted by experimentalists. The lack of up-take may be due to perceived complexity of the accurate multi-dimensional treatment, or perhaps the insidious nature of the multi-dimensional errors that typically grow monotonically with time. In any case, experimentalists are familiar with one-dimensional transient heat flux treatments, so we have presented expressions that distinguish the one-dimensional and multi-dimensional components. The approach we advocate enables application of any preferred one-dimensional analysis with the multi-dimensional component treated as a correction term. Therefore, the only challenge for experimentalists is implementing a treatment for the multi-dimensional component.

We suggest a treatment for the multi-dimensional component that is both simple and accurate for modest levels of discretization. Our primary simplification is that the measured surface temperature is considered constant within the spatial element over which the temperature measurement is taken. This allows the multi-dimensional conduction component to be determined by first convolving temperature differences with respective impulse response functions, and then summing the results across all pixels within the region of interest. The necessary impulse response functions are always finite; there are no singularities encountered in the treatment of the multi-dimensional component. The method enables recovery of the heat flux values with an accuracy of better than 1\,\% using around 50 surface elements in the case of representative two-dimensional heat flux variations, and around 600 surface elements in the case of an illustrative three-dimensional problem. Higher levels of accuracy can be achieved if required, using finer spatial discretization.

The present work has treated the semi-infinite case but the method can, in principle, be applied to any geometry. Impulse response functions for the surface elements (pixels) at which temperature measurements are obtained will need to be defined. In general, impulse response functions are obtained by integrating heat kernel results across the surface element on which temperature measurements are obtained. Approximations for the required heat kernels can be obtained either from truncated infinite sums in the case of geometries with symmetry, or from finite element numerical methods in more complicated cases. 

\bibliography{q_from_T}

\appendix

\section{Some Justifying Computations}\label{justify}
The formula \eqref{DtoN} expresses the flux ($q$) at the boundary $\partial \Omega$ in terms of the surface temperature ($g$). In this Appendix, we provide some auxillary computations to justify a number of the steps that were taken to find this formula. We will make the following assumption on the temperature profile $u$ on $\Omega$: for each multi-index $\gamma=(\gamma_1,\cdots,\gamma_n)\in \mathbb{N}^n$, there exists a $C>0$ so that 
\begin{align}\label{ugrowth}
    \left|D^{\gamma}u(t,x)\right|\le C e^{C\left|x\right|} \ \text{for each} \ t\in [0,T_{\text{max}}]. 
\end{align}
This multi-index notation $D^{\gamma}$ tells us how many times to differentiate $u$, and in which variables. For example $D^{(1,0,0,\cdots,0)}u$ means the first derivative of $u$ with respect to the first variable $x_1$. Thus, condition \eqref{ugrowth} simply means that any partial derivatives of $u$ do not grow faster than exponentially at spatial infinity. 
Clearly the temperature in most real-world applications will satisfy this sub-exponential growth assumption.

\begin{lemma}\label{lem:vdformula}
Fix any $t>0$. 
    For $v$ given as in \eqref{vsoln}, formula \eqref{vdinside} holds for all $x\in \Omega$ including for $x\in \partial \Omega$. Furthermore, $\frac{\partial v}{\partial x_n}(t,x)$ is continuous, all the way up to the boundary. 
\end{lemma}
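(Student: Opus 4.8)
The plan is to read \eqref{vdinside} as the assertion that the normal derivative $\nabla_{\nu_x}=-\partial/\partial x_n$ may be moved inside the double integral defining $v$ in \eqref{vsoln}, and that the resulting function of $x$ extends continuously to $\partial\Omega$. Writing $f(s,y):=\left(\tfrac{\partial}{\partial s}-\Delta_y\right)\tilde u(s,y)$ and $p=\nabla_{\nu_x}k_D$, the growth hypothesis \eqref{ugrowth} gives $|f(s,y)|\le Ce^{C|y|}$, and the entire lemma reduces to the single integrability estimate
\begin{align}\label{eq:vdkey}
\int_0^t\int_\Omega \left|p(t-s,x,y)\right|e^{C|y|}\,dy\,ds<\infty,\qquad\text{locally uniformly for }x\in\overline{\Omega}.
\end{align}
Once \eqref{eq:vdkey} is in hand, both the formula and the continuity statement follow from standard dominated-convergence arguments, as sketched below.

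To prove \eqref{eq:vdkey} I would first compute $p$ explicitly by differentiating $k_D=k_E(t,x,y)-k_E(t,x,y^*)$ in $x_n$, obtaining (with $\tau=t-s$)
\begin{align*}
p(\tau,x,y)=\frac{x_n-y_n}{2\tau}\,k_E(\tau,x,y)-\frac{x_n+y_n}{2\tau}\,k_E(\tau,x,y^*),
\end{align*}
so that $|p|$ is bounded by two terms of the shape $\tfrac{|x_n\mp y_n|}{2\tau}k_E$. For each term I would integrate first over the tangential variables $y_1,\dots,y_{n-1}$, where $k_E$ factorises and the tangential Gaussian integrates to $(4\pi\tau)^{-1/2}$, and then over $y_n\in[0,\infty)$ via the substitution $w=(x_n\mp y_n)/\sqrt{\tau}$. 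Because $\int_{\mathbb{R}}|w|e^{-w^2/4}\,dw<\infty$, each $y$-integral is $O(\tau^{-1/2})$ \emph{uniformly in $x_n\ge 0$}, and $\int_0^t\tau^{-1/2}\,d\tau=2\sqrt{t}<\infty$. The weight $e^{C|y|}$ is absorbed into the Gaussian by completing the square, at the cost of a harmless bounded factor $e^{C|x|+C^2\tau}$; this yields \eqref{eq:vdkey}.

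With the estimate established I would conclude as follows. For interior $x$, the difference quotient of $k_D$ in $x_n$ equals $\partial_{x_n}k_D=-p$ at an intermediate point by the mean value theorem, and \eqref{eq:vdkey}, strengthened to a supremum over a short normal segment of base points, supplies a single integrable majorant independent of the increment; dominated convergence then justifies differentiation under the integral, giving \eqref{vdinside}. Since that majorant remains integrable as $x_n\downarrow 0$, the same argument with one-sided increments shows the formula persists for $x\in\partial\Omega$. Finally, the right-hand side $G(t,x):=\int_0^t\int_\Omega p(t-s,x,y)f(s,y)\,dy\,ds$ is continuous on all of $\overline{\Omega}$, because its integrand is continuous in $x$ for almost every $(s,y)$ and is dominated by the $x$-uniform majorant of \eqref{eq:vdkey}; hence $\partial v/\partial x_n=-G$ is continuous up to the boundary, as claimed.

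The main obstacle is the uniform-in-$x_n$ control in \eqref{eq:vdkey}: one must verify that the $O(\tau^{-1/2})$ bound does not degenerate as $x$ approaches $\partial\Omega$. This is exactly where the reflection structure of $k_D$ is decisive, since the linear factors $x_n\mp y_n$ generated by differentiating the Euclidean and reflected kernels are what keep $p$ integrable all the way to the boundary, in contrast to the normal derivative $\nabla_{\nu_y}k_D$ discussed after \eqref{usolution}, whose boundary singularity is \emph{not} integrable. A secondary technical point is confirming that the sub-exponential weight $e^{C|y|}$ from \eqref{ugrowth} is genuinely dominated by the Gaussian, which the completing-the-square step handles.
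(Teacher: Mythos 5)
Your key integrability estimate is correct and is essentially the paper's opening move: after the change of variables $z=(x-y)/\sqrt{t-s}$, the $y$-integral of $\left|p(t-s,x,y)\right|$ against the sub-exponential weight from \eqref{ugrowth} is $O\!\left((t-s)^{-1/2}\right)$ for each fixed $x\in\overline{\Omega}$, hence time-integrable. The gap is in how you pass from this fixed-$x$ integrability to differentiation under the integral and to continuity. Both of your concluding steps rest on a dominating function of the form $\sup_{\xi}\left|p(t-s,\xi,y)\right|e^{C\left|y\right|}$, the supremum taken over a normal segment (or neighbourhood) of fixed length $h_0>0$. That supremum is \emph{not} time-integrable near $s=t$: writing $\tau=t-s$, the function $r\mapsto \frac{\left|r\right|}{2\tau}e^{-r^2/4\tau}$ attains its global maximum $e^{-1/2}/\sqrt{2\tau}$ at $\left|r\right|=\sqrt{2\tau}$, so $\sup_{\xi_n\in I}\frac{\left|\xi_n-y_n\right|}{2\tau}e^{-(\xi_n-y_n)^2/4\tau}$ equals that maximum for every $y_n$ within distance $\sqrt{2\tau}$ of the segment $I$ --- a set of measure at least comparable to $h_0$. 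After multiplying by the normal factor $(4\pi\tau)^{-1/2}$ of $k_E$ and integrating in $y_n$, the supremum contributes a term of order $h_0/\tau$, and $\int_0^t h_0\,\tau^{-1}\,d\tau$ diverges. So the mean-value-theorem-plus-dominated-convergence step fails as stated, and the same obstruction defeats your continuity argument, which needs the identical supremum majorant.

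The paper avoids this by regularizing in time: it inserts a cutoff $\eta\!\left((t-s)/\epsilon\right)$ to excise the $s\to t$ singularity, differentiates under the integral in the regularized expression $v_\epsilon$ (now legitimate), and shows $\partial v_\epsilon/\partial x_n\to V$ \emph{uniformly on compacta} because the discarded time slab contributes only $O(\sqrt{\epsilon})$ by the fixed-$x$ estimate. Continuity of $V$ then follows from the uniform limit theorem rather than dominated convergence, and the identification $V=\partial v/\partial x_n$ comes from integrating $\partial v_\epsilon/\partial x_n$ along the normal segment from the boundary projection $x_b$ and passing to the limit via the fundamental theorem of calculus. You could also rescue your own route by replacing the mean value theorem with the integral form of the fundamental theorem: the difference quotient is the average over $\xi_n\in[0,h]$ of $\int_0^t\int_\Omega p(t-s,x+\xi_n e_n,y)f(s,y)\,dy\,ds$, and splitting the time integral at $t-\delta$ gives a tail of size $O(\sqrt{\delta})$ uniformly in $h$, while the head converges by dominated convergence. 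Either way, some device beyond a single global majorant is required near the time singularity; as written, your proof does not close.
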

\begin{proof}
Let $w(t,x)=\left(\frac{\partial }{\partial s}-\Delta\right)\tilde{u}(t,x)$. 
    First, observe that the expression 
    \begin{align*}
       V(t,x)=\int_0^t \int_{\Omega}\frac{\partial k_D(t-s,x,y)}{\partial x_n}w(s,y)dyds 
    \end{align*}
    is defined for each $t>0$ and $x\in \Omega$, including the boundary $\partial \Omega$, i.e., these integrals converge. To see this, we can use the change of variables $\frac{x-y}{\sqrt{t-s}}=z$, giving 
    \begin{align*}
        \int_0^t \int_{\mathbb{R}^n} \left|\frac{\partial k_D(t-s,x,y)}{\partial x_n}w(s,y)\right|dyds\le \int_0^t\frac{C}{(4\pi)^{n/2}\sqrt{t-s}}\int_{\mathbb{R}^n}\left|z_n\right|\text{exp}\left(C\left|x-\sqrt{t-s}z\right|-\frac{\left|z\right|^2}{4}\right) dzds;
    \end{align*}
    this integral clearly converges for each fixed $t$ and $x$. 

    Now we want to show that $V=\frac{\partial v}{\partial x_n}$. 
    Define a cut-off function $\eta:[0,\infty)\to [0,1]$ so that $\eta(r)=0$ for $r\le 1$, and $\eta(r)=1$ for $r\ge 2$. Now consider the function
    \begin{align*}
        v_{\epsilon}(t,x)=\int_0^t \eta \left(\frac{t-s}{\epsilon}\right)\int_{\Omega}k_D(t-s,x,y)w(s,y)dyds;
    \end{align*}
    we are deliberately removing the $x=y$, $t=s$ singularity. Lebesgue's dominated convergence theorem implies that for each fixed $x$ and $t$, $\lim_{\epsilon\to 0}v_{\epsilon}(t,x)=v(t,x)$. Another application of the dominated convergence theorem gives that for each $\epsilon>0$, we have  
    \begin{align*}
         \frac{\partial v_{\epsilon}}{\partial x_n}=\int_0^t \eta\left(\frac{t-s}{\epsilon}\right)\int_{\Omega}\frac{\partial k_D(t-s,x,y)w(s,y)}{\partial x_n}dyds
    \end{align*}
    so that 
    \begin{align*}
    \frac{\partial v_{\epsilon}}{\partial x_n}-V(t,x)=\int_{t-2\epsilon}^{t}\left(\eta\left(\frac{t-s}{\epsilon}\right)-1\right)\int_{\Omega}\frac{\partial k_D(t-s,x,y)w(s,y)}{\partial x_n}dyds. 
    \end{align*}
    Using the same change of variables $z=\frac{x-y}{\sqrt{t-s}}$ as before, we conclude that $\frac{\partial v_{\epsilon}}{\partial x_n}$ converges to $V(t,x)$ as $\epsilon\to 0$, uniformly on compact subsets for $x$ and $t$. The uniform limit theorem implies that $V(t,x)$ is continuous (since $\frac{\partial v_{\epsilon}}{\partial x_n}$ is obviously continuous for each $\epsilon>0$). Finally, we show that $\frac{\partial v(t,x)}{\partial x_n}=V(t,x)$. For this computation, let $x=x_b-\nu x_n$, where $x_b=(x_1,x_2,\cdots,x_{n-1},0)$ denotes the projection onto the boundary. Using the fundamental theorem of calculus, we have 
    \begin{align*}
        v(t,x)&=\lim_{\epsilon\to 0}v_{\epsilon}(t,x)\\
        &=\lim_{\epsilon\to 0}\left(v_{\epsilon}(t,x_b)+\int_0^{x_n}\frac{\partial v_{\epsilon}}{\partial x_n}(t,x_b-\nu s)ds\right)\\
        &=v(t,x_b)+\lim_{\epsilon\to 0}\int_0^{x_n}\frac{\partial v_{\epsilon}}{\partial x_n}(t,x_b-s\nu)ds.
    \end{align*}
    Uniform convergence implies 
    \begin{align*}
        v(t,x)&=v(t,x_b)+\int_0^{x_n}V(t,x_b-s\nu)ds.
    \end{align*}
    The fundamental theorem of calculus then implies that $\frac{\partial v}{\partial x_n}$ exists and coincides with $V$, as required. 
\end{proof}
\begin{lemma}
   If $g$ is smooth, then for each fixed $t>0$ and $x\in \Omega$, we have $\lim_{\tau\to t}(g(\tau,x)-g(t,x))\int_{\Omega}p(t-\tau,x,y)dy=0$.
\end{lemma}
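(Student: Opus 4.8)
The plan is to reduce the statement to a one-line estimate, once the spatial integral $\int_\Omega p(t-\tau,x,y)\,dy$ has been evaluated in closed form for an \emph{arbitrary} $x\in\Omega$ (not just $x\in\partial\Omega$), and then to play the vanishing of $g(\tau,x)-g(t,x)$ against the singularity of that integral as $\tau\to t$ (with $t-\tau>0$).

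First I would evaluate the integral. Since $p(s,x,y)=\nabla_{\nu_x}k_D=-\partial_{x_n}k_D$ and $k_D(s,x,y)=k_E(s,x,y)-k_E(s,x,y^*)$, differentiating the Gaussians gives
\[
    p(s,x,y)=\frac{x_n-y_n}{2s}\,k_E(s,x,y)-\frac{x_n+y_n}{2s}\,k_E(s,x,y^*).
\]
Integrating out the $n-1$ tangential variables collapses each kernel to a one-dimensional normal Gaussian $(4\pi s)^{-1/2}e^{-(x_n\mp y_n)^2/4s}$, and the two remaining integrals over $y_n\in[0,\infty)$ are elementary because their integrands are exact $y_n$-derivatives. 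Carrying this out yields
\[
    \int_\Omega p(s,x,y)\,dy=-\frac{1}{\sqrt{\pi s}}\,e^{-x_n^2/4s},
\]
which reduces, at $x_n=0$, to the value $-1/\sqrt{\pi s}$ already recorded in the main derivation; this consistency check is a good sign that the computation is correct.

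Setting $s=t-\tau$ and using $e^{-x_n^2/4s}\le 1$, the integral is bounded in absolute value by $1/\sqrt{\pi(t-\tau)}$ uniformly in $x$. Because $g$ is smooth, its time derivative at $(t,x)$ is finite, so there is a constant $C$ with $|g(\tau,x)-g(t,x)|\le C|t-\tau|$ for $\tau$ near $t$. Multiplying the two bounds gives
\[
    \bigl|(g(\tau,x)-g(t,x))\textstyle\int_\Omega p(t-\tau,x,y)\,dy\bigr|\le \frac{C}{\sqrt{\pi}}\sqrt{t-\tau}\xrightarrow[\tau\to t]{}0,
\]
so the limit is $0$. For interior points $x$ the conclusion is even more immediate, since the factor $e^{-x_n^2/4(t-\tau)}$ decays faster than any power of $t-\tau$ and already forces the integral itself to zero.

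There is no genuine obstacle here: the argument is a half-power gain $\sqrt{t-\tau}$ defeating a half-power singularity $(t-\tau)^{-1/2}$. The only step that goes beyond what the text has already done is the explicit evaluation of $\int_\Omega p\,dy$ for general interior $x$, and I would treat the careful bookkeeping of that elementary integration — in particular the separation into tangential and normal directions — as the one place to be attentive, after which the estimate closes instantly.
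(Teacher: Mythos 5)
Your proposal is correct and takes essentially the same approach as the paper: evaluate $\int_{\Omega}p(t-\tau,x,y)\,dy$ in closed form (your general formula $-e^{-x_n^2/4(t-\tau)}/\sqrt{\pi(t-\tau)}$ correctly specializes to the paper's $-1/\sqrt{\pi(t-\tau)}$ on the boundary), then use smoothness of $g$ via the mean value theorem to gain a factor of $|t-\tau|$ that defeats the $(t-\tau)^{-1/2}$ singularity.
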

\begin{proof}
   A direction computation gives $\int_{\Omega}p(t-\tau,x,y)dy=-\frac{1}{\sqrt{\pi(t-\tau)}}$. The result then follows from the observation that for each $\tau,t,x$, we have $g(\tau,x)-g(t,x)=\frac{\partial g}{\partial t}(\tau^*,x)(\tau-t)$ for some $\tau^*\in [\tau,t]$; this is a consequence of the mean value theorem.
\end{proof}

\begin{lemma}
   If $x\in \partial \Omega$, then  $\lim_{\tau\to t}\left(\int_{\Omega}p(t-\tau,x,y)(\tilde{u}(\tau,y)-g(\tau,x))dy-\frac{\partial \tilde{u}}{\partial \nu_x}(t,x)\right)=0$. 
\end{lemma}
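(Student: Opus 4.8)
The plan is to write $\sigma=t-\tau$ and treat $\int_{\Omega}p(\sigma,x,y)(\tilde u(\tau,y)-g(\tau,x))\,dy$ as a concentration integral (an approximate-identity of first-moment type) as $\sigma\to0^+$. Since $x\in\partial\Omega$, I would first substitute the explicit boundary expression $p(\sigma,x,y)=-\frac{y_n}{\sigma(4\pi\sigma)^{n/2}}\exp(-|x-y|^2/4\sigma)$ computed in the main text, and use $g(\tau,x)=\tilde u(\tau,x)$ (valid because $\tilde u$ agrees with $g$ on the boundary) to rewrite the integrand difference as $\tilde u(\tau,y)-\tilde u(\tau,x)$. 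The weight $p(\sigma,x,\cdot)$ concentrates at $y=x$ as $\sigma\to0$, and the extra factor $y_n$ makes it a first-moment rather than a zeroth-moment kernel; this is exactly what is needed to reproduce a normal derivative.

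Next I would perform the rescaling $y=x+\sqrt\sigma\,z$ (so $y_n=\sqrt\sigma\,z_n$ with $z_n\ge0$, since $x_n=0$), under which $p(\sigma,x,y)\,dy=-\frac{z_n}{\sqrt\sigma\,(4\pi)^{n/2}}e^{-|z|^2/4}\,dz$, and Taylor expand at fixed time, $\tilde u(\tau,y)-\tilde u(\tau,x)=\nabla\tilde u(\tau,x)\cdot(y-x)+R(\tau,y)$. The linear term can be handled exactly: the $\sqrt\sigma$ from $(y-x)$ cancels the $1/\sqrt\sigma$ in the rescaled kernel, the tangential components $\int z_n z_i\,e^{-|z|^2/4}\,dz$ vanish by oddness for $i\ne n$, and the normal component uses $\int_{z_n\ge0}z_n^2 e^{-|z|^2/4}\,dz=(2\sqrt\pi)^n=(4\pi)^{n/2}$, so the linear contribution is precisely $-\partial_{x_n}\tilde u(\tau,x)=\frac{\partial\tilde u}{\partial\nu_x}(\tau,x)$, independent of $\sigma$. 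Continuity of $\nabla\tilde u$ then sends this to $\frac{\partial\tilde u}{\partial\nu_x}(t,x)$ as $\tau\to t$, which is exactly the quantity being subtracted off in the statement.

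The remaining work is to show that the remainder integral $\int_{\Omega}p(\sigma,x,y)R(\tau,y)\,dy\to0$. Using the Taylor bound $|R(\tau,y)|\le C|y-x|^2 e^{C|y|}$ --- which is where a sub-exponential growth hypothesis on the second derivatives of $\tilde u$, analogous to \eqref{ugrowth}, is invoked --- and repeating the rescaling, the powers of $\sigma$ collect to a net factor $\sqrt\sigma$, while the $z$-integral $\int_{z_n\ge0}z_n|z|^2 e^{-|z|^2/4}e^{C\sqrt\sigma|z|}\,dz$ stays bounded for $\sigma\le1$ because the Gaussian dominates the linear exponential. Hence the remainder is $O(\sqrt\sigma)$ and vanishes. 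I expect this remainder estimate to be the main obstacle: Taylor expansion about the fixed point $x$ is only locally faithful, so the argument must lean on the Gaussian's super-exponential decay to tame both the quadratic remainder and the growth of $\tilde u$ over the unbounded half-space. A clean way to organize the bound is to split the domain into $|y-x|<\delta$ and $|y-x|\ge\delta$ and estimate each piece separately, or equivalently to carry the global exponential bound through the rescaling as above.
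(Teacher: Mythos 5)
Your proposal is correct and follows essentially the same route as the paper's proof: substitute the explicit boundary formula for $p$, Taylor-expand $\tilde{u}(\tau,y)-\tilde{u}(\tau,x)$ about $x$ using the growth condition \eqref{ugrowth}, observe that the tangential first-order terms vanish by oddness while the normal term reproduces $\frac{\partial \tilde{u}}{\partial \nu_x}(\tau,x)$ exactly via the Gaussian second-moment identity, and kill the quadratic remainder with the rescaling $z=(y-x)/\sqrt{t-\tau}$, which yields the same $O(\sqrt{t-\tau})$ bound. The only cosmetic difference is that you make the final appeal to continuity of $\nabla\tilde{u}$ in $\tau$ explicit, which the paper leaves implicit.
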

\begin{proof}
   Since $\tilde{u}(\tau,x)=g(\tau,x)$, Taylor's theorem and the growth condition \eqref{ugrowth} tell us that $\tilde{u}(\tau,y)-g(\tau,x)=D\tilde{u}(\tau,x)+O(\left|y-x\right|^2)$, where the second term satisfies $\left|O(\left|y-x\right|^2)\right|\le C_1\left|y-x\right|^2e^{C_1\left|y-x\right|}$ for some constant $C_1$.    
   In the semi-infinite case, we therefore have 
\begin{align*}
    \int_{\Omega}p(t-\tau,x,y)(\tilde{u}(\tau,y)-\tilde{g}(\tau,x))dy&=\int_{\Omega} p(t-\tau,x,y)(D\tilde{u}(\tau,x)\cdot (y-x)+O(\left|y-x\right|^2))dy\\
    &=\int_{\Omega}-\frac{y_n}{(t-\tau)(4\pi (t-\tau))^{\frac{n}{2}}}\text{exp}\left(\frac{-\left|x-y\right|^2}{4(t-\tau)}\right)(D\tilde{u}(\tau,x)\cdot (y-x)+O(\left|y-x\right|^2))dy\\
    &=\frac{\partial \tilde{u}}{\partial \nu_x}(\tau,x)\int_{\Omega}\frac{y_n^2}{(t-\tau)(4\pi (t-\tau))^{\frac{n}{2}}}\text{exp}\left(\frac{-\left|x-y\right|^2}{4(t-\tau)}\right)\\
    &+\sum_{j=1}^{n-1}\frac{\partial u}{\partial x_j}(\tau,x)\int_{\Omega}-\frac{y_n(x_j-y_j)}{(t-\tau)(4\pi (t-\tau))^{\frac{n}{2}}}\text{exp}\left(\frac{-\left|x-y\right|^2}{4(t-\tau)}\right)dy\\
    &+\int_{\Omega}\frac{y_n}{(t-\tau)(4\pi (t-\tau))^{\frac{n}{2}}}\text{exp}\left(\frac{-\left|x-y\right|^2}{4(t-\tau)}\right)O(\left|y-x\right|^2)dy.
\end{align*}
A direct computation demonstrates that the second term vanishes. The third term does not necessarily vanish, but by performing the usual $z=\frac{x-y}{\sqrt{t-\tau}}$ change of variables, we see that the third term vanishes in the $\tau\to t$ limit. 
The first term is 
\begin{align*}
    \frac{\partial \tilde{u}}{\partial \nu_x}(\tau,x)\int_{\Omega}\frac{y_n^2}{(t-\tau)(4\pi (t-\tau))^{\frac{n}{2}}}\text{exp}\left(\frac{-\left|x-y\right|^2}{4(t-\tau)}\right)&= \frac{\partial \tilde{u}}{\partial \nu_x}(\tau,x)\int_{\Omega}\frac{z_n^2}{(4\pi )^{\frac{n}{2}}}\text{exp}\left(\frac{-\left|z\right|^2}{4}\right)dz\\
    &=\frac{\partial \tilde{u}}{\partial \nu_x}(\tau,x)\int_{0}^{\infty}\frac{z_n^2}{\sqrt{4\pi}}\text{exp}\left(\frac{-z_n^2}{4}\right)dz_n\\
    &=\frac{\partial \tilde{u}}{\partial \nu_x}(\tau,x).
\end{align*} 
\end{proof}
\begin{lemma}
    $\lim_{\tau\to t}\int_0^t (g(s,x)-g(t,x))\int_{\partial \Omega} \nabla_{\nu_y}p(t-s,x,y)dyds$ exists, even though we are integrating over a singularity. 
\end{lemma}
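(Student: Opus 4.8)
The plan is to collapse the inner spatial integral using the closed form already obtained in \cref{sec:derivation}, reducing the claim to the convergence of a single improper integral in $s$ whose only singularity, at $s=t$, turns out to be integrable. The limit is taken over the upper endpoint $\tau$ (matching the term $I_2$ in \eqref{fluxsplit}), so I read the expression as $\lim_{\tau\to t}\int_0^\tau$. For $x\in\partial\Omega$ the earlier computation gives
\begin{align*}
    \int_{\partial\Omega}\nabla_{\nu_y}p(t-s,x,y)\,dy=\frac{1}{\sqrt{4\pi}\,(t-s)^{3/2}},
\end{align*}
so the quantity of interest becomes
\begin{align*}
    \frac{1}{\sqrt{4\pi}}\int_0^\tau\frac{g(s,x)-g(t,x)}{(t-s)^{3/2}}\,ds.
\end{align*}
In this reduced form the sole difficulty is the factor $(t-s)^{-3/2}$ as $s\to t$.

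The decisive observation is that the numerator vanishes to first order at $s=t$, cancelling exactly one power of the singularity. Since $g$ is smooth, hence $C^1$ in time on the compact interval $[0,T_{\text{max}}]$, the mean value theorem yields a constant $M=M(x)$ with $|g(s,x)-g(t,x)|\le M|t-s|$. The integrand is therefore dominated by $\tfrac{M}{\sqrt{4\pi}}(t-s)^{-1/2}$, and since $\int_0^t(t-s)^{-1/2}\,ds=2\sqrt{t}<\infty$, this majorant is integrable on $[0,t]$ and independent of $\tau$.

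With an integrable majorant in hand, I would finish by dominated convergence: as $\tau\to t$ the truncated integrands converge pointwise to $\tfrac{g(s,x)-g(t,x)}{\sqrt{4\pi}\,(t-s)^{3/2}}$ and are uniformly dominated, so the limit exists and equals the convergent improper integral
\begin{align*}
    \frac{1}{\sqrt{4\pi}}\int_0^t\frac{g(s,x)-g(t,x)}{(t-s)^{3/2}}\,ds,
\end{align*}
which is, up to sign, the temporal-variance term in the one-dimensional component of \cref{DtoN}. I do not expect a serious obstacle here: the entire content is the reduction of the non-integrable $(t-s)^{-3/2}$ to the integrable $(t-s)^{-1/2}$ via the first-order vanishing of $g(s,x)-g(t,x)$. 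The only point that warrants a little care is confirming that the explicit boundary evaluation holds uniformly enough to justify the interchange of limit and integral, but this is immediate from the closed form recorded above.
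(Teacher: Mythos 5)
Your proposal is correct and follows essentially the same route as the paper: collapse the inner boundary integral to $\frac{1}{(t-s)\sqrt{4\pi(t-s)}}$, then use the mean value theorem on $g(s,x)-g(t,x)$ to reduce the singularity to the integrable $(t-s)^{-1/2}$. The paper's proof is just a terser version of the same argument (it leaves the final dominated-convergence step implicit by noting the majorant is "clearly integrable").
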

\begin{proof}
    We have already seen that
    $ \int_{\partial \Omega}\nabla_{\nu_y}p(t-s,x,y)dy=\frac{1}{(t-s)\sqrt{4\pi(t-s)}}$. 
    Therefore, assuming $g$ is smooth and using the mean value theorem, 
    \begin{align*}
        \left|g(s,x)-g(t,x)\right|\int_{\partial \Omega} \nabla_{\nu_y}p(t-s,x,y)dy\le \frac{\sup_{v\in [0,t]}\left|\frac{\partial g(v,x)}{\partial t}\right|}{\sqrt{4\pi (t-s)}}.
    \end{align*}
    This is clearly integrable for $s\in [0,t]$. 
\end{proof}
\begin{lemma}
    $\int_{0}^{t}\int_{\partial \Omega}(g(s,y)-g(s,x))\nabla_{\nu_y}p(t-s,x,y)dyds$ exists. 
\end{lemma}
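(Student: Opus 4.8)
The plan is to show that the iterated integral converges absolutely, so that the only possible obstruction is the coincident singularity at $s=t$, $y=x$, and that the spatial variation $g(s,y)-g(s,x)$ supplies exactly enough vanishing to tame it. First I would recall that for $x,y\in\partial\Omega$ the kernel takes the explicit form
\begin{align*}
\nabla_{\nu_y}p(t-s,x,y)=\frac{1}{(t-s)(4\pi(t-s))^{n/2}}\exp\left(-\frac{|x-y|^2}{4(t-s)}\right),
\end{align*}
as computed above. For any fixed $s<t$ this is a Gaussian in the tangential variable $y$, so the inner spatial integral converges outright: the Gaussian decay in $|x-y|$ dominates the sub-exponential growth of $g(s,\cdot)$ guaranteed by \eqref{ugrowth}. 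Hence the whole question reduces to controlling the inner integral as $s\to t$, i.e. as $\tau:=t-s\to 0$.

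Next I would Taylor expand the forcing in the tangential displacement. Writing $y=x+w$ with $w\in\mathbb{R}^{n-1}$ (both points lie on $\partial\Omega$), smoothness of $g$ together with \eqref{ugrowth} gives
\begin{align*}
g(s,x+w)-g(s,x)=\nabla g(s,x)\cdot w+R(s,x,w),\qquad |R(s,x,w)|\le C\,e^{C|x|}\,|w|^2 e^{C|w|},
\end{align*}
where $\nabla$ denotes the tangential gradient and $x$ is the fixed boundary point of the statement. The crucial observation is that the kernel depends on $y$ only through $|w|$, so it is \emph{even} in $w$; consequently the linear term $\nabla g(s,x)\cdot w$ integrates to exactly zero over $\partial\Omega$ by parity, and only the quadratic remainder $R$ survives. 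This cancellation is the heart of the matter: the crude bound $|g(s,y)-g(s,x)|\le C|y-x|$ would yield an inner integral of order $(t-s)^{-1}$, which is \emph{not} integrable in $s$, so the oddness of the first-order term cannot be dispensed with.

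Finally I would estimate the surviving remainder by the scaling substitution $z=(y-x)/\sqrt{t-s}$, under which $dy=\tau^{(n-1)/2}dz$ and $|w|^2=\tau|z|^2$. The remainder then contributes
\begin{align*}
\left|\int_{\partial\Omega}R(s,x,w)\,\nabla_{\nu_y}p(\tau,x,y)\,dy\right|
\le C\,e^{C|x|}\,\tau^{-1/2}\int_{\mathbb{R}^{n-1}}|z|^2 e^{C\sqrt{\tau}|z|}e^{-|z|^2/4}\,dz,
\end{align*}
where the $z$-integral is finite and bounded uniformly for $\tau\in[0,t]$ since the Gaussian overwhelms $e^{C\sqrt{\tau}|z|}\le e^{C\sqrt{t}|z|}$. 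Collecting the powers of $\tau=t-s$ — namely $\tau^{1}$ from $|w|^2$, $\tau^{(n-1)/2}$ from $dy$, and $\tau^{-1-n/2}$ from the kernel prefactor — produces exactly $\tau^{-1/2}$, independently of the dimension $n$. Since $\int_0^t (t-s)^{-1/2}\,ds<\infty$, the double integral converges absolutely and the claim follows. The main obstacle is not this final computation but recognising that the borderline singularity is rescued only by the exact vanishing of the first-order term, which upgrades the effective decay of the forcing from linear to quadratic.
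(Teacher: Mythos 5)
Your proof is correct and follows essentially the same route as the paper's: rescale by $z=(y-x)/\sqrt{t-s}$, Taylor-expand $g$ in the tangential displacement, kill the first-order term by the oddness of $z\mapsto z\,e^{-|z|^2/4}$ against the even Gaussian kernel, and observe that the quadratic remainder leaves an integrable $(t-s)^{-1/2}$ singularity. Your explicit power-counting and the remark that the crude Lipschitz bound would give a non-integrable $(t-s)^{-1}$ are a welcome clarification of why the cancellation is indispensable, but the substance of the argument is identical.
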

\begin{proof}
Using the change of variables $z=\frac{y-x}{\sqrt{t-s}}$, so $dz=\frac{dy}{(t-s)^{(n-1)/2}}$, we find 
    \begin{align*}
     &\int_{0}^{t}\int_{\partial \Omega}(g(s,y)-g(s,x))\nabla_{\nu_y}p(t-s,x,y)dyds\\
     &=\int_0^t \int_{\mathbb{R}^{n-1}}\frac{e^{-\left|z\right|^2/4}}{(t-s)^{\frac{3}{2}}(4\pi)^{\frac{n}{2}}}(g(s,x+\sqrt{t-s}z)-g(s,x))dzds\\
     &=\int_0^t \int_{\mathbb{R}^{n-1}}\frac{e^{-\left|z\right|^2/4}}{(t-s)^{\frac{3}{2}}(4\pi)^{\frac{n}{2}}}\left(\sqrt{t-s}Dg(s,x)\cdot z+O((t-s)\left|z\right|^2)\right)dzds,
    \end{align*}
where we have again used Taylor's Theorem and the growth condition \eqref{ugrowth} in the last line, giving $O((t-s)\left|z\right|^2)\le C_2 (t-s)\left|z\right|^2e^{C_2\sqrt{t-s}\left|z\right|}$ for some constant $C_2>0$. Continuing the computation gives  
\begin{align*}
    &\int_{0}^{t}\int_{\partial \Omega}(g(s,y)-g(s,x))\nabla_{\nu_y}p(t-s,x,y)dyds\\&=\int_0^t \frac{1}{(t-s)(4\pi)^{\frac{n}{2}}} \int_{\mathbb{R}^{n-1}}e^{-\left|z\right|^2/4}Dg(s,x)\cdot zdzds+\int_0^t \int_{\mathbb{R}^{n-1}}\frac{e^{-\left|z\right|^2/4}}{(t-s)^{\frac{3}{2}}(4\pi)^{\frac{n}{2}}}O((t-s)\left|z\right|^2)dzds.
\end{align*}
Notice that, despite the diverging of the integral $\int_0^t\frac{1}{(t-s)}ds$, the first term vanishes on account of the single-variable integral $\int_{-\infty}^{\infty}e^{-z^2/4}zdz=0$. It is clear that the second integral converges. 
\end{proof}

\section{Gaussian flux distribution}
\label{sec:gaussian_flux_results}

\subsection{Two-dimensional}
If the boundary heat flux is given by 
\begin{align*}
    q(t,x) = \text{e}^{-x_1^2} ,
\end{align*}
we can determine the variation in surface temperature from
\begin{align}
\label{eq:Gaussian_2D_g}
    g(t,x_1) & = \int_0^{t} \int_{-\infty}^\infty  \frac{\text{e}^{-y_1^2}}{2 \pi s} \textrm{e}^{-(x_1-y_1)^2/4 s} \, dy_1 ds \nonumber \\
    &=\frac{1}{2\pi }\int_0^t \frac{1}{s}\int_{-\infty}^{\infty}\text{e}^{-y_1^2-(x_1-y_1)^2/4s} dy_1ds \nonumber \\
    &= \frac{\text{e}^{-x_1^2}}{\sqrt{\pi}}  \int_0^t \frac{1}{\sqrt{4 s^2 + s}}\text{exp} \left(\frac{4 x_1^2 s }{4s +  1} \right)  ds.
\end{align}

\subsection{Three-dimensional}
If the boundary heat flux is given by 
\begin{align*}
    q(t,x)= \text{e}^{-|x|^2} ,
\end{align*}
we can determine the variation in surface temperature from
\begin{align*}
    g(t,x)&=\int_0^t \int_{\partial \Omega} \frac{2 \text{e}^{-y^2} }{(4\pi s)^{3/2}}\text{e}^{-\left|x-y\right|^2/4s}dyds \\
    &= \frac{2}{(4\pi)^{3/2}} \int_0^t \frac{1 }{s^{3/2}} \int_{\partial \Omega}  \text{e}^{-y^2-\left|x-y\right|^2/4s}dyds .
\end{align*}
Introducing the change of variables $z\sqrt{s}=y-x$, and if $z = (z_1,z_2)$ and $x = (x_1,x_2)$ we have $dy=s.dz$ so we write
\begin{align*}
    g(t,x) &= \frac{2}{(4\pi)^{3/2}} \int_0^t \frac{1}{\sqrt{s}} \int_{\partial \Omega} \text{e}^{-(z\sqrt{s}+x)^2-z^2/4} dz ds \\
    &= \frac{2}{(4\pi)^{3/2}} \int_0^t \frac{1}{\sqrt{s}} \int_{\partial \Omega} \text{e}^{-(z\sqrt{s}+x)^2-z^2/4} dz ds .
\end{align*}
The boundary integral can be calculated to give the temperature as 
\begin{align}
    g(t,x) &= \frac{\text{e}^{-(x_1^2+x_2^2)}}{\sqrt{\pi}}  \int_0^t  \frac{1}{\sqrt{s} (4s + 1)} \text{exp} \left( \frac{4(x_1^2+x_2^2)s}{4s + 1} \right) ds.
\end{align}

\end{document}